\documentclass[11pt]{article}
\usepackage{jheppub}
\usepackage{graphicx} 
\usepackage{graphbox}
\usepackage[usenames,dvipsnames]{xcolor} 
\usepackage{tikz}	
\usetikzlibrary{matrix}
\usepackage{dirtytalk}
\usepackage{dsfont}
\usepackage{colortbl}
\usepackage{braket}
\usepackage{amsthm}
\usepackage{amsmath}
\usepackage{float}
\usepackage{bbm, dsfont}
\usepackage[normalem]{ulem}
\usepackage{slashed}
\usepackage{mathtools}
\usepackage{multirow}
\usepackage{cancel}
\usepackage{cleveref}
\usepackage{orcidlink}
\usepackage{enumerate}
\usepackage{fontawesome}
\usepackage{adjustbox}
\usepackage{tcolorbox}
\usepackage{bm}

\newcommand{\bs}[1]{\boldsymbol{#1}}
\newcommand{\rd}{\mathrm{d}}

\newcommand{\cA}{\mathcal{A}}

\DeclareMathOperator{\Const}{Const}
\DeclareMathOperator{\Frac}{Frac}
\DeclareMathOperator{\SL}{SL}
\DeclareMathOperator{\K}{K}
\DeclareMathOperator{\E}{E}
\DeclareMathOperator{\lm}{lm}
\DeclareMathOperator{\Ker}{Ker}
\newcommand{\cF}{\mathcal{F}_{\mathbb{C}}}
\newcommand{\ccF}{\mathcal{F}}

\newcommand{\bx}{\bs{x}}
\newcommand{\KK}{\mathbb{K}}

\newcommand{\bI}{\bs{I}}
\newcommand{\bP}{\bs{P}}
\newcommand{\bA}{\bs{A}}

\newcommand{\bS}{\bs{S}}
\newcommand{\cS}{\mathcal{S}}
\newcommand{\bF}{\bs{F}}
\newcommand{\bH}{\bs{H}}
\newcommand{\bC}{\bs{C}}
\newcommand{\bK}{\bs{K}}
\newcommand{\bU}{\bs{U}}
\newcommand{\bAc}{\bs{\check{A}}}
\newcommand{\bPc}{\bs{\check{P}}}
\newcommand{\bIc}{\bs{\check{I}}}
\newcommand{\bJc}{\bs{\check{J}}}
\newcommand{\bSc}{\bs{\check{S}}}
\newcommand{\bOmegac}{\bs{\check{\Omega}}}
\newcommand{\bR}{\bs{R}}

\newcommand{\bM}{\bs{M}}
\newcommand{\bJ}{\bs{J}}
\newcommand{\bG}{\bs{G}}

\newcommand{\bD}{\bs{D}}
\newcommand{\bOmega}{\bs{\Omega}}
\newcommand{\bDelta}{\bs{\Delta}}
\newcommand{\balpha}{\bs{\alpha}}

\newcommand{\bnu}{\bs{\nu}}
\newcommand{\cB}{\mathcal{B}}
\newcommand{\bbA}{\mathbb{A}}

\DeclareMathOperator{\dlog}{\rd\!\log}

\def\beq{\begin{equation}}
\def\eeq{\end{equation}}
\def\bsp#1\esp{\begin{split}#1\end{split}}

\newtheorem{definition}{Definition}
\newtheorem{theorem}{Theorem}

\newtheorem{lemma}{Lemma}
\newtheorem{corollary}{Corollary}
\theoremstyle{definition}

\newcounter{examplenew} \setcounter{examplenew}{0}
\newcounter{subexampleone}[examplenew]
\setcounter{subexampleone}{0}
\newcounter{subexampletwo}[examplenew]
\setcounter{subexampletwo}{0}
\renewcommand{\thesubexampleone}{\arabic{examplenew}.\arabic{subexampleone}}
\renewcommand{\thesubexampletwo}{\arabic{examplenew}.\arabic{subexampletwo}}

\usepackage{xparse}

\NewDocumentEnvironment{examplenew}{O{} m}
{
    \ifnum#2=1
        \setcounter{examplenew}{1}
        \refstepcounter{subexampleone}
        \par \medskip \noindent \textbf{Example \thesubexampleone}%
    \else
        \ifnum#2=2
            \setcounter{examplenew}{2}
             \refstepcounter{subexampletwo}
           \par \medskip\noindent  \textbf{Example \thesubexampletwo}%
        \fi
    \fi
    \ifx&#1&%
    \else
        ~(#1)~%
    \fi
    \rmfamily 
}
{   \par
    \medskip
    \noindent}

\newcommand{\eps}{\varepsilon}

\DeclareMathOperator{\diag}{diag}
\DeclareMathOperator{\End}{End}
\DeclareMathOperator{\GL}{GL}

\title{Self-duality from twisted cohomology}
\abstract{Recently a notion of self-duality for differential equations of maximal cuts was introduced, which states that there should be a basis in which the matrix for an $\eps$-factorised differential equation is persymmetric. It was observed that the rotation to this special basis may introduce a Galois symmetry relating different integrals. We argue that the proposed notion of self-duality for maximal cuts stems from a very natural notion of self-duality from twisted cohomology. Our main result is that, if the differential equations and their duals are simultaneously brought into canonical form, the cohomology intersection matrix is a constant. Furthermore, we show that one can associate quite generically a Lie algebra representation to an $\eps$-factorised system. For maximal cuts, this representation is irreducible and self-dual. The constant intersection matrix can  be interpreted as expressing the equivalence of this representation and its dual, which in turn results in constraints for the differential equation matrix. Unlike the earlier proposal, the most natural symmetry of the differential equation matrix is defined entirely over the rational numbers and is independent of the basis choice. 
}
\author[a]{Claude Duhr,}
\emailAdd{cduhr@uni-bonn.de}
\affiliation[a]{Bethe Center for Theoretical Physics, Universität Bonn, D-53115, Germany
}

\author[a%,\orcidlink{0000-0002-3328-499X}
]{Franziska Porkert,}
\emailAdd{fporkert@uni-bonn.de}

\author[a]{Cathrin Semper,}
\emailAdd{csemper@uni-bonn.de}

\author[a%,\orcidlink{0009-0007-3246-6357}
]{Sven F. Stawinski,}
\emailAdd{sstawins@uni-bonn.de}

\begin{document}

\begin{flushright}
BONN-TH-2024-11
\end{flushright}

\definecolor{lightorange}{rgb}{1,0.6875,0.5}
\definecolor{lightblue}{rgb}{0.68, 0.85, 0.9}

\maketitle

%\newline 

%\newpage
% !TEX root = main.tex
\newpage
\section*{Index of notations}
\begin{center}
    \begin{tabular}{ll}
    $\mathbb{K}$ & an algebraic number field.\\
        $\mathbb{K}(\eps)$ & the field of rational functions in $\eps$ with coefficients in $\KK$.\\
                $R^{N\times N}$ &for any ring $R$, the algebra of $N\times N$ matrices with entries in $\KK$.\\
                $\GL(N,R)$ & the multiplicative subgroup of $R^{N\times N}$ of matrices with full rank.\\
    $\mathcal{A}$ & a differentially closed $\mathbb{K}$-algebra whose constants are in $\mathbb{K}$.\\
     $\Omega^1(\cA)$&the $\KK$-vector space of differential forms of the form $\sum_{k=1}^r\rd x_k\,f_{k}(\bx)$, $f_{k} \in \cA$.\\
 $Z^1(\cA) $ & the subspace of $\omega\in \Omega^1(\cA)$ with $\rd \omega=0$.\\
 $\mathbb{A}$& a finite-dimensional subspace of  $Z^1(\cA) $ with basis $\omega_1,\ldots,\omega_p$.\\
$\boldsymbol{A}(\boldsymbol{x})$& a matrix with entries in $\bbA$.\\
%$\omega_i$& element of a basis of $\mathbb{A}$\\
 $\cA_{\eps}$&$ \cA\otimes_{\KK} \KK(\eps).$\\
$\cF$&$\Frac(\mathbb{C}\otimes_{\KK}\cA).$\\
$A_{\mathbb{A},1}$& a $\mathbb{K}$-vector space generated by the letters $t_i$. \\ 
$\mathfrak{g}_{\bbA}$&the Lie algebra generated by the $t_i$ (with some specific commutation relations).\\
$A_{\mathbb{A}}$ & the universal enveloping algebra of $\mathfrak{g}_{\bbA}$.\\
$\mathbb{G}$& the generating series of all words in the letters $t_i$ with coefficients that are \\&iterated integrals of the forms $\omega_i$.
%
%
%$I_{\mathbb{A}}$ & two-sided ideal generated by the relations in eq.~\eqref{eq:t_commutators}\\
%$\mathcal{G}$& a Gröbner basis for the ideal $I_{\mathbb{A}}$
    \end{tabular}
\end{center}

\section{Introduction}
\label{sec:intro}

Feynman integrals are the foundation of essentially all higher-order computations in perturbative Quantum Field Theory (pQFT), and as such they are extremely important both as a tool to understand the mathematical structure underlying pQFT and to make precise predictions for collider and gravitational wave experiments. It is therefore important to have efficient tools for their evaluation. Experience from the past shows that the development of novel computational tools often goes hand-in-hand with an increased understanding of the underlying mathematics. For this reason, a lot of effort has recently gone into studying the mathematics underlying Feynman integrals.

Feynman integrals typically diverge, and need to be regulated. The most commonly used regularisation scheme is dimensional regularisation~\cite{THOOFT1972189}, where the integrals are computed in $D=4-2\eps$ dimensions, and expanded around $\eps=0$ at the very end. Dimensionally-regulated Feynman integrals are meromorphic function of $\eps$, and the divergencies show up as poles in $\eps$. Over the last decade, there was enormous progress in our understanding of dimensionally-regulated Feynman integrals (see, e.g., ref.~\cite{Bourjaily:2022bwx} for a recent review). One of the most important advances was the realisation that for every family of Feynman integrals, there is a distinguished set of bases, sometimes called a \emph{canonical} or \emph{$\eps$-factorised} bases, such that the differential equation takes a particularly simple form where the dimensional regulator $\eps$  enters only linearly~\cite{Henn:2013pwa}. Canonical bases are at the origin of almost all modern results for multi-loop Feynman integrals. The existence of canonical bases is still conjectural, but by now there is a massive body of evidence that such a basis exists at least for very large classes of Feynman integrals. Another important advance was the discovery that the appropriate mathematical framework to study dimensionally-regulated Feynman integrals is twisted cohomology~\cite{Mastrolia:2018uzb}, which has seen numerous applications to Feynman integral computations~\cite{Mizera:2017rqa,Mizera:2019gea,Mizera:2019ose,matsumoto_relative_2019-1,Abreu:2019wzk,Britto:2021prf,Caron-Huot:2021xqj,Caron-Huot:2021iev,Chen:2022lzr,Giroux:2022wav,Gasparotto:2023roh,Fontana:2023amt,Bhardwaj:2023vvm,De:2023xue,Chen:2023kgw,Duhr:2023bku,Brunello:2023rpq,Crisanti:2024onv}, most notably to identify novel ways to reduce Feynman integrals to a basis.

Despite all this progress, there are still many open questions related to multi-loop Feynman integrals, and their explicit evaluation still remains challenging, especially for integrals depending on many loops and scales. In order to make further progress, it can therefore be fruitful to focus on a simpler set of objects, which nevertheless retain the important mathematical features of Feynman integrals. Such a set of simpler objects are maximal cuts, which can be defined, loosely speaking, as the integrals obtained after taking residues at all propagator poles, which physically corresponds to putting all propagators on their mass shell. In addition, maximal cuts can have additional properties which may deserve attention in their own right. For example, when evaluated at $\eps=0$, maximal cuts often evaluate to periods of some algebraic varieties. They thus provide a direct link between algebraic geometry and pQFT. Moreover, twisted cohomology endows maximal cuts with a natural notion of self-duality, which leads to quadratic relations between different maximal cuts~\cite{Duhr:2024rxe}. 

Recently, another notion of self-duality was observed for explicit results for some maximal cuts~\cite{Pogel:2022ken,Pogel:2022vat,Pogel:2022yat,Pogel:2024sdi}. This apparently new concept of self-duality has a priori  no connection to twisted cohomology and states that there is a distinguished canonical basis for maximal cuts where the differential equations exhibit some additional symmetry. The rotation to this distinguished basis may require one to introduce additional square roots, and this was interpreted as the existence of a Galois symmetry relating different Feynman integrals. 
The main goal of this paper is to argue that the recently proposed novel concept of self-duality for maximal cuts introduced ref.~\cite{Pogel:2024sdi} is a consequence of the natural self-duality from twisted cohomology, but a specific choice of basis is enforced. Our main technical result is the proof that the intersection matrix of co-cycles computed for a canonical basis (and its dual) is a constant. Moreover, we show that to every canonical basis one can associate a representation of a Lie algebra, and the constant intersection matrix can then be interpreted as the change of basis that relates this representation and its dual. For maximal cuts, these representations turn out to be self-dual. This enforces constraints on the differential equations satisfied by the integrals, which in some cases implies the observation of ref.~\cite{Pogel:2024sdi}, but with a particular basis choice. Our formalism also allows us to pinpoint cases in which the observation of ref.~\cite{Pogel:2024sdi} may not hold. Our results show that self-duality, when interpreted in the right way, is independent of the choice of basis. We find that the choice of basis advocated in ref.~\cite{Pogel:2024sdi} is purely ad hoc, and the Galois symmetry observed in ref.~\cite{Pogel:2024sdi} is simply a consequence of this basis choice. 

Our paper is organised as follows: in section~\ref{sec:FIs} we briefly review Feynman integrals and the observation of ref.~\cite{Pogel:2024sdi}, and in section~\ref{sec:self-duality} we review some basics of twisted cohomology theories. In section~\ref{sec:eps-fac-DEQ} we present our main result, and we prove that the intersection matrix computed for canonical bases is constant. In section~\ref{sec:main:thm} we show how self-duality puts constraints on the differential equation, and in section~\ref{sec:galois} we comment on Galois symmetries. In section~\ref{sec:conclusion} we draw our conclusions.

% !TEX root = main.tex

\section{Feynman integrals and their cuts}
\label{sec:FIs}

%\subsection{Feynman integrals and their differential equations}
%\label{sec:DEQ_review}

\subsection{Feynman integrals and their differential equations}
The main objects of interest in this paper are Feynman integrals in dimensional regularisation in $D=d-2\eps$ dimensions:
\begin{align}
\label{eq:FI_def}
I_{\bs{\nu}}^D \left(\{p_i\cdot p_j\},\{m_i^2\} \right)= e^{L\gamma_E \varepsilon}\int \left(\prod_{j=1}^L \frac{d^D \ell_j}{i\pi^\frac{D}{2}}\right) \frac{1}{D_1^{\nu_1}\dots D_m^{\nu_m}}\, , 
\end{align}
%%%%%%%%%%%%%%%%%%%%%%%%%
where $d$ is an integer, $\bs{\nu} = (\nu_1,\ldots,\nu_m)$ is a vector of integers and $\gamma_E=-\Gamma'(1)$ is the Euler-Mascheroni constant. The momenta flowing through the $m$ propagators are linear combinations of the $E$ linearly independent external momenta $p_j$ and the $L$ loop momenta $\ell_j$, and we denote the squared mass of the $i^{\textrm{th}}$ propagator by $m_i^2$.

Feynman integrals that only differ by the value of $\bs{\nu}$ define a family. It is well known that not all members of a given family are independent, but there are linear relations among them. These so-called \emph{integration-by-parts} (IBP) relations~\cite{Tkachov:1981wb,Chetyrkin:1981qh} are succinctly captured by the identity
\beq\label{eq:IBP_tot_diff}
\int \rd^D \ell_i\,\frac{\partial}{\partial \ell_i^\mu}\left(\frac{v^{\mu}}{D_1^{\nu_1}\dots D_m^{\nu_m}} \right)=0\,,
\eeq
where $v^\mu$ can be either an internal or external momentum. Differentiation then produces propagators with shifted exponents (plus numerator factors that can again be expressed in terms of inverse propagators), leading to a set of linear relations among integrals from the same family, but with shifted values of the exponents $\nu_i$. The coefficients appearing in these relations are rational functions of the kinematic scales and the dimensional regulator $\eps$. One can solve the IBP relations and express all members of a family in terms of some basis integrals, often called \emph{master integrals} in the literature. The number of master integrals is always finite~\cite{Smirnov:2010hn,Bitoun:2017nre}.

There are a variety of techniques to evaluate the master integrals. One of the most prominent ones is the differential equations technique~\cite{Kotikov:1990kg,Kotikov:1991hm,Kotikov:1991pm,Gehrmann:1999as,Henn:2013pwa}. Differentiation with respect to an external momentum or scale leads again to a linear combination of integrals with shifted exponents, which can themselves be written in terms of master integrals. One obtains in this way a linear system of first-order differential equations satisfied by the master integrals:
\beq\label{eq:DEQ_prototype}
\rd \bI(\bx,\eps) = \bOmega(\bx,\eps)\bI(\bx,\eps)\,,
\eeq 
where $\bx= (x_1,\ldots,x_r)$ denotes the vector of dimensionless kinematic variables on which the integrals depend,\footnote{By dimensional analysis, the integral can only depend non trivially on dimensionless ratios of scales.} $\rd = \sum_{i=1}^r\rd x_i\,\partial_{x_i}$ is the total differential with respect to $\bx$ and $\bI(\bx,\eps) = \big(I_1(\bx,\eps),\ldots,I_N(\bx,\eps)\big)^T$ is the vector of master integrals. $\bOmega(\bx,\eps)$ is a matrix of rational one-forms in $\bx$ and it is rational in $\eps$, i.e., $\bOmega(\bx,\eps)$ can be written in the form
\beq
\bOmega(\bx,\eps) = \sum_{i=1}^r\rd x_i\,\bOmega_i(\bx,\eps)\,,
\eeq
where the $\bOmega_i(\bx,\eps)$ are matrices of rational functions in $\bx$ and $\epsilon$. 
There is a natural partial order on the master integrals (given by the propagators, or equivalently, which entries in $\bnu$ are positive), and if the master integrals are ordered in a way that respects this partial ordering, then the matrix $\bOmega(\bx,\eps)$ is block upper-triangular. Note that, since $\rd^2=0$, we must have
\beq\label{eq:flatness}
\rd\bOmega(\bx,\eps) = \bOmega(\bx,\eps) \wedge \bOmega(\bx,\eps) \,.
\eeq

%The analytic expressions for Feynman integrals are also much simpler as for the full Feynman integrals. They often evaluate to algebraic functions, or to periods of Riemann surfaces or Calabi-Yau varieties. Hence, maximal cuts are the simplest representatives of integrals that arise in Quantum Field Theory, and so they are an important ingredient

There is a considerable freedom and arbitrariness in how to fix a basis of master integrals. If we change to a new basis $\bI'(\bx,\eps)$ via $\bI(\bx,\eps) = \bM(\bx,\eps)\bI'(\bx,\eps)$, then the new vector of master integrals satisfies the differential equation
\beq
\rd \bI'(\bx,\eps) = \bOmega'(\bx,\eps)\bI'(\bx,\eps)\,,
\eeq 
with
\beq\label{eq:gauge_transformation}
\bOmega'(\bx,\eps) =  \bM(\bx,\eps)^{-1}\Big(\bOmega(\bx,\eps) \bM(\bx,\eps) - \rd \bM(\bx,\eps)\Big)\,.
\eeq
A judicious choice of basis can have a strong impact on how difficult it is to solve the system of differential equations. 
In applications one is not interested in the exact dependence on $\eps$, but only in the first few terms in the Laurent expansion around $\eps=0$. In ref.~\cite{Henn:2013pwa} it was observed that there is always a distinguished basis $\bJ(\bx,\eps) =  \bS(\bx,\eps)^{-1}\bI(\bx,\eps)$ such that
\beq\label{eq:DEQ_can_basis}
\rd\bJ(\bx,\eps) = \eps\bA(\bx)\bJ(\bx,\eps)\,.
\eeq
The matrix $\bA(\bx)$ is related to $\bOmega(\bx,\eps)$ via eq.~\eqref{eq:gauge_transformation}:
\beq\label{eq:gauge_to_canonical}
\eps\bA(\bx) = \bS(\bx,\eps)^{-1}\Big(\bOmega(\bx,\eps)\bS(\bx,\eps)-\rd\bS(\bx,\eps)\Big)\,.
\eeq
The main advantage of such a so-called \emph{$\eps$-factorised} or \emph{canonical} basis lies in the fact that the solution of eq.~\eqref{eq:DEQ_can_basis} can easily be written down as a path-ordered exponential
\beq
\bJ(\bx,\eps) = \bU_{\gamma}(\bx,\eps)\bJ_0(\eps)\,,
\eeq
where $\bJ_0(\eps)$ denotes the value of $\bJ(\bx,\eps)$ at some point $\bx=\bx_0$ and $\gamma$ is a path from $\bx_0$ to a generic point $\bx$, and we defined the path-ordered exponential 
\beq\label{eq:Pexp_def}
\bU_{\gamma}(\bx,\eps) = \mathbb{P}\exp\left[\eps\int_{\gamma}\bA(\bx)\right]\,.
\eeq
Note that in this basis eq.~\eqref{eq:flatness} takes the form 
\beq
\label{eq:flatness_canonical}
\rd\bA(\bx) = \bA(\bx)\wedge\bA(\bx) = 0\,,
\eeq
and expresses the fact that the path-ordered exponential does not depend on the details of the path $\gamma$, but only on the end-points and the homotopy-class of the path. The path-ordered exponential can easily be expanded in $\eps$ up to a certain order, and the coefficients of the expansion will involve iterated integrals~\cite{ChenSymbol} built out of the one-forms $\omega_i$, with $\bA(\bx) = \sum_i\bA_i\omega_i$,
\beq\label{eq:Pexp_def_exp}
\bU_{\gamma}(\bx,\eps) = \mathds{1} + \sum_{k=1}^\infty \eps^k\sum_{1\le i_1,\ldots,i_k\le p}\bA_{i_1}\cdots\bA_{i_k}  I_{\gamma}(\omega_{i_1},\ldots,\omega_{i_k}) \,,
\eeq
where we defined the iterated integral:
\beq\label{eq:iterated_int_def}
 I_{\gamma}(\omega_{i_1},\ldots,\omega_{i_k}) = \int_{\gamma} \omega_{i_1}\cdots \omega_{i_k} = \int_{0\le \xi_k\le \cdots \le \xi_1\le1}\rd \xi_1 \,f_{i_1}(\xi_1)\,\rd \xi_2 \,f_{i_2}(\xi_2)\cdots \rd \xi_k\,f_{i_k}(\xi_k)\,.
 \eeq
 
 Besides the fact that $\eps$ factorises, there is another important difference between the matrices $\bOmega(\bx,\eps)$ and $\bA(\bx)$. While the entries of $\bOmega(\bx,\eps)$ are rational one-forms in $\bx$, the entries of $\bA(\bx)$ may not be rational. Indeed, the matrix $\bS(\bx,\eps)$ that expresses the change of basis from the original basis to the canonical one will in general involve the maximal cuts of the Feynman integrals in integer dimensions, and the latter often contain algebraic functions of the external kinematics and/or the periods and quasi-periods of some algebraic variety. This non-rational dependence then typically feeds into the matrix $\bA(\bx)$ via eq.~\eqref{eq:gauge_to_canonical}. There are examples where $\bA(\bx)$ involves modular forms~\cite{Adams:2018yfj,Broedel:2018rwm,Adams:2018bsn,Dlapa:2022wdu,Muller:2022gec,Pogel:2022ken,Klemm:2024wtd}, the coefficients of the Kronecker-Eisenstein series~\cite{Bogner:2019lfa,Muller:2022gec} or periods of Calabi-Yau varieties and integrals thereof~\cite{Pogel:2022ken,Pogel:2022vat,Pogel:2022yat,Gorges:2023zgv,Klemm:2024wtd}.
 
We stress that the existence of a canonical form for the differential equation is still conjectural, but by now there is an overwhelming body of evidence supporting the conjecture.
In particular, in the case where the differential forms $\omega_i$ are $\dlog$-forms, there is a solid understanding of how to find the canonical basis, see, e.g., refs.~\cite{Henn:2014qga,Lee:2014ioa,Gituliar:2017vzm,Meyer:2017joq,Lee:2020zfb,Henn:2020lye,Dlapa:2020cwj}. However, it is known that not all Feynman integrals admit a differential equation in canonical $\dlog$-form. Nevertheless, there are also many examples of Feynman integrals associated to non-trivial geometries of elliptic or Calabi-Yau type that admit a differential equation in canonical form, cf., e.g., refs.~\cite{Adams:2018yfj,Broedel:2018rwm,Adams:2018bsn,Bogner:2019lfa,Dlapa:2022wdu,Muller:2022gec,Pogel:2022ken,Pogel:2022vat,Pogel:2022yat,Gorges:2023zgv,Klemm:2024wtd,Ahmed:2024tsg}. The conjectured existence of a canonical form with the desired properties is thus on solid ground, and it is expected that the conjecture always holds (at least for the classes of associated geometries that have so far been encountered for Feynman integrals).

\subsection{Maximal cuts and self-duality}
\label{sec:maximal-cuts}
In applications, one is often also interested in cuts of Feynman integrals. The latter are defined, loosely speaking, by putting some of the propagators on their mass shell, which can be mathematically implemented by a residue prescription (cf., e.g., ref.~\cite{Britto:2024mna} for a recent review). The operation of taking residues preserves the linear relations and the differential equations, a feature known as \emph{reverse-unitarity} in the physics literature~\cite{Anastasiou:2002yz,Anastasiou:2003yy}. It follows that cuts of master integrals satisfy the same differential equations as their uncut analogues, but we need to put to zero all master integrals where we take a residue at a propagator raised to a negative power. Of special interest in this paper will be so-called \emph{maximal cuts}, which correspond to integrals where all propagators are put on their mass shell. Equivalently, a maximal cut of a Feynman integral is obtained by evaluating the integrand on a contour that encircles all the propagator poles. The contour is not specified otherwise, and the number of independent maximal cut contours is always equal to the number of master integrals with the given set of propagators. The differential equations for the maximal cuts are also particularly simple: they are given by the matrices appearing on the blocks on the diagonal of $\bOmega(\bx,\eps)$. 

We now focus on maximal cuts.
Let us consider a family of Feynman integrals as in eq.~\eqref{eq:FI_def}, and we assume that there are $N$ master integrals that have the maximal set of propagators. Then there are also precisely $N$ independent maximal cut contours, and we can form an $N\times N$ matrix $\bF(\bx,\eps)$ of maximal cuts associated to this family. By reverse-unitarity, this matrix fulfills a differential equation of the type
\beq
\rd\bF(\bx,\eps) = \bOmega^{\textrm{m.c.}}(\bx,\eps)\bF(\bx,\eps)\,,
\eeq
where $ \bOmega^{\textrm{m.c.}}(\bx,\eps)$ is an $N\times N$ matrix of rational one-forms. Based on the discussion of the previous section, it is expected that we can change to a canonical basis, i.e., we expect that there is a matrix $\bS^{\textrm{m.c.}}(\bx,\eps)$ such that if we let 
\beq\label{eq:FSG}
\bF(\bx,\eps) = \bS^{\textrm{m.c.}}(\bx,\eps)\bG(\bx,\eps)\,,
\eeq
then 
\beq
\rd\bG(\bx,\eps)= \eps\bA^{\textrm{m.c.}}(\bx)\bG(\bx,\eps)\,.
\eeq

In ref.~\cite{Pogel:2024sdi} it was observed that maximal cuts have the property of \emph{self-duality}, which according to ref.~\cite{Pogel:2024sdi} can be stated as follows: there is a constant matrix $\bM$ such that $\widetilde{\bA}^{\textrm{m.c.}}(\bx)=\bM^{-1}\bA^{\textrm{m.c.}}(\bx)\bM$ is persymmetric, i.e., it is symmetric with respect to the antidiagonal. This is equivalent to 
\beq\label{eq:weinzierl_sd}
\widetilde{\bA}^{\textrm{m.c.}}(\bx)= \bK_N\widetilde{\bA}^{\textrm{m.c.}}(\bx)^T\bK_N^{-1}\,,
\eeq
with
\beq\label{eq:K_matrix}
\bK_N = \bK_N^T=\bK_N^{-1}= \left(\begin{smallmatrix}
0 & 0 & \ldots & 0& 1\\
0 & 0 &  & 1& 0\\
\vdots &  & {.^{.^{.^{.^.}}}} & & \vdots\\
\phantom{.}&&&&\\
0 & 1 & \ldots & 0& 0\\
1 & 0 &  & 0& 0\\
\end{smallmatrix}\right)\,.
\eeq
Self-duality in the form of eq.~\eqref{eq:weinzierl_sd} can be interpreted as some sort of hidden symmetry of maximal cuts.
It was first observed in the context of multi-loop equal-mass banana integrals in refs.~\cite{Pogel:2022ken,Pogel:2022vat,Pogel:2022yat}, which are related to one-parameter families of Calabi-Yau varieties, and self-duality is expected to be a property of the Gauss-Manin connection for $\eps=0$ for such families. In ref.~\cite{Pogel:2024sdi}, self-duality for maximal cuts was observed to hold more generally, based on the explicit analysis of differential equations for some Feynman integrals and their maximal cuts. The goal of this paper is to connect the self-duality from ref.~\cite{Pogel:2024sdi} to a more general notion of self-duality that arises naturally from twisted cohomology, which is known to provide the appropriate mathematical framework to study Feynman integrals in dimensional-regularisation~\cite{Mastrolia:2018uzb}.
%One of the motivations of this paper is to provide a rigorous proof of the conjecture of self-duality from ref.~\cite{Pogel:2024sdi}.

% !TEX root = main.tex

\section{Twisted cohomology and self-duality}
\label{sec:self-duality}

\subsection{Twisted cohomology}
\label{sec:twisted}

Our primary goal is to understand self-duality for maximal cuts, whose natural framework is twisted cohomology. 
%Twisted cohomology can be summarised, loosely speaking, as the study of integrals that depend on a multi-valued integrand. 
More precisely, we want to consider integrals of the form
\begin{align}
\label{gen_twisted_integral}
    \int_{\Gamma} \Phi \varphi\, , 
\end{align}
where $\varphi$ is a rational $n$-form on $X=\mathbb{C}^n-\Sigma$, with $\Sigma$ a union of hypersurfaces and $\Phi$ is a multi-valued function of the form
\begin{align}
\label{twist32}
     \Phi= \prod_{i=0}^{r} L_i(\bs{z})^{\alpha_i} \,.
\end{align}
The $L_i(\bs{z})$ are polynomials in the integration variables $\bs{z}\in X$ and $\Sigma$ contains the loci $L_i(\bs{z})=0$. In the most general case $\alpha_i\in\mathbb{C}$, but we assume the exponents $\alpha_i$ to be generic here:
\begin{align}
\label{restrict}
\alpha_i\notin \mathbb{Z}\text{~~~~~and~~~~~} \sum_{i=0}^{r} \alpha_i \notin \mathbb{Z}\, .
\end{align}

Maximal cuts of Feynman integrals naturally match the class of integrals defined by eq.~\eqref{gen_twisted_integral} with the genericity assumption in eq.~(\ref{restrict}). This can for example easily be seen in the Baikov representation of a Feynman integral~\cite{BAIKOV1997347} (see also ref.~\cite{LEE2010474}), where the twist is identified with the Baikov polynomial, and the multi-valuedness comes from the fact that the exponent of the Baikov polynomial contains the dimensional regulator $\eps$, which is not an integer. For maximal cuts, all possible poles of $\varphi$ are already regulated by the twist, so $\varphi$ introduces no further poles. In the following, we provide a very brief review of twisted cohomology, focusing on the aspects that are needed to understand self-duality for maximal cuts. For a more in-depth review of twisted (co-)homology and its applications we refer to the literature \cite{yoshida_hypergeometric_1997,aomoto_theory_2011,Mizera:2017rqa,Mizera:2019gea,Mizera:2019ose,matsumoto_relative_2019-1,Caron-Huot:2021iev,Caron-Huot:2021xqj,Giroux:2022wav,Crisanti:2024onv,Gasparotto:2023roh,Fontana:2023amt,Bhardwaj:2023vvm,De:2023xue,Britto:2021prf,Duhr:2023bku,Brunello:2023rpq,Duhr:2024rxe}.

The twist defines a connection,
\begin{align}
\label{connect}
\nabla = \rd +\omega \wedge \cdot \text{~~~~with~~~~} \omega = \frac{\rd \Phi}{\Phi} =\rd\!\log \Phi\, \, , 
\end{align}
such that total derivatives of $\Phi\varphi$ are equivalent to computing the covariant derivative of $\varphi$:
\beq
\rd\big(\Phi\varphi\big) = \Phi\,\nabla\varphi\,.
\eeq
Total derivatives of $\Phi\varphi$ integrate to boundary terms. We thus want to identify classes of integrands modulo total derivatives, which is equivalent to identifying rational $n$-forms $\varphi$ modulo covariant derivatives, $\varphi \sim \varphi +\nabla \tilde{\varphi}$.
This naturally leads one to consider
the  \textit{twisted cohomology group} 
\beq
\label{twistedcoh}
    H_{\text{dR}}^n (X, \nabla)=C^n(X,\nabla)/B^n(X,\nabla)\,, 
    \eeq
    with
    \beq\bsp
    \label{CBeq}
    C^n(X,\nabla)&\,=\{ n-\text{forms }\varphi \text{ on } X\, :\, \nabla \varphi = 0 \}\,,\\
    B^n(X,\nabla)&\,= \{ n-\text{forms }\nabla\tilde{\varphi}\, :\, \tilde{\varphi} \text{ a $(n-1)$-form}\} \, . 
\esp\eeq
%All twisted cohomology groups are finite-dimensional, and only the middle cohomology group $H_{\text{dR}}^{n}(X,\nabla)$ is nonzero~\cite{aomoto_theory_2011}. We therefore only consider $k=n$ from here on. 
The elements of $H_{\text{dR}}^{n}(X,\nabla)$ are called \textit{twisted co-cycles}. The relevant integration contours are \textit{twisted cycles} (also called \emph{loaded cycles}) from 
\begin{align}
\label{homgroup}
    C_n(X,\check{\mathcal{L}}) =  \{n-\text{cycles }\gamma \otimes \Phi|_{\gamma}\,  :\, \partial (\gamma\otimes \Phi|_{\gamma} )=0\} \, . 
\end{align}
Here $\gamma \otimes \Phi|_\gamma$ denotes an $n$-cycle $\gamma$ on $X$ together with a local choice of branch for $\Phi$, specified by the locally constant sheaf $\check{\mathcal{L}}$. 
%In practice, this means that every twisted cycle comes with a local choice of branch for $\Phi$. %We denote by $\text{supp}(\gamma\otimes \Phi) = \gamma$ the geometric cycle $\gamma$ which supports $\gamma\otimes \Phi$, but we forget the branch of the twist loaded onto it. 
%As usually no confusion arises, we will often identify $\gamma\otimes \Phi$ with $\gamma$. 
The operation $\partial (\gamma\otimes\Phi|_{\gamma})$ restricts the contour with its branch of $\Phi$ to the boundary. If we only consider integration cycles modulo boundaries, we are led to consider the
\textit{twisted homology group}, defined by
\begin{align}
\label{twistedho}
H_n(X, \check{\mathcal{L}}) =C_n(X,\check{\mathcal{L}})/ B_n(X,\check{\mathcal{L}}) \, , 
\end{align}
where $B_n(X,\check{\mathcal{L}})= \{n-\text{cycles }\partial \left(\gamma\otimes \Phi|_{\gamma}\right)\}$ denotes the space of all twisted boundaries. All cycles relevant in this paper are regularised chambers between regulated boundaries. The details of the regularisation are explained in refs.~\cite{yoshida_hypergeometric_1997,kita_intersection_1994-2,aomoto_theory_2011, Bhardwaj:2023vvm, Duhr:2023bku}.
We note that for the purposes of evaluating integrals, the details of the regularisation can usually be ignored, and the twisted cycles can be treated as non-regularised chambers between boundaries, see, e.g., ref.~\cite{yoshida_hypergeometric_1997}.
%, and we do not recall it here.  

Having identified the relevant integrands and contours as the twisted co-cycles and cycles respectively, we can pair them to obtain integrals such as those in eq.~\eqref{gen_twisted_integral}: 
\begin{align}
 \langle \gamma |\varphi]= \int_{\gamma} \Phi \varphi\, . 
\end{align}
If we choose bases $\{\gamma_i\}$ of $H_n(X,\check{\mathcal{L}})$ and $\{\varphi_j\}$ of $H_{\text{dR}}^n(X,\nabla)$, we can pair the basis elements to obtain the period matrix $\bs{P}$ with entries
\begin{align}
 P_{ij}= \langle\varphi_i| \gamma_j]\  =\int_{\gamma_j}\Phi\varphi_i\, . 
\end{align}
The twisted (co-)homology groups are vector spaces, and so it is natural to consider their duals. %It can be shown that t
Dual cocycles form equivalence classes modulo covariant derivatives with respect to the connection $\check{\nabla}=\rd -\omega\wedge \cdot\, $. We will give a concrete description of the (dual) (co-)cycles in section~\ref{sec:twisted-self-duality}. We can fix bases $\{\check{\gamma}_i\}$ and $\{\check{\varphi}_i\}$ of the dual twisted (co-)homology groups. This allows us to define the dual period matrix $\bs{\check{P}}$ by pairing dual cycles with dual co-cycles:
\begin{align}
    \check{P}_{ij} = [\check{\gamma}_j|\check{\varphi}_i\rangle = \int_{\check{\gamma}_j} \Phi^{-1} \check{\varphi}_i\, .  
\end{align}
In addition, we can define two more pairings, called \emph{intersection pairings}, between the twisted (co-)cycles and their duals. They give rise to intersection matrices between twisted (co-)cycles:
\beq\bsp
    C_{ij} &\,= \frac{1}{(2\pi i)^n} \langle \varphi_i |\check{\varphi}_j\rangle = \frac{1}{(2\pi i)^n}\int_X \varphi_i\wedge \check\varphi_{j}\, ,\\
    H_{ij}&\,=[\check{\gamma}_j|\gamma_i]\, ,
\esp\eeq
where $[\check{\gamma}_i|\gamma_j]$ counts the (topological) intersections of the two contours, taking into account their orientations as well as the branch choices for $\Phi$ and $\Phi^{-1}$ loaded onto them. 
%More details on their computation can be found in refs.~\cite{yoshida_hypergeometric_1997,Duhr:2023bku,kita_intersection_1994-2}. 

The matrices $\bP$, $\bs{\check{P}}$, $\bC$ and $\bH$ will play an important role in the remainder of this paper. We therefore summarise some of their properties. First, all four pairings are non-degenerate, and so the matrices $\bP$, $\bs{\check{P}}$, $\bC$ and $\bH$ have full rank (at least for generic values of $\bx$ and $\balpha = (\alpha_1,\ldots,\alpha_r)$). 
Second, in applications all quantities may depend on some parameters $\bx$ (the external kinematic data in the case of Feynman integrals) as well as the exponents $\balpha$ appearing in the twist. The entries of the twisted period matrix and its dual are generally transcendental functions of $\bx$, while the matrices $\bC$ only depend rationally on $\bx$, and $\bH$ is constant in $\bx$. The $\bx$-dependence of the matrices is governed by the following differential equations (cf., e.g., refs.~\cite{Chestnov:2022alh,Chestnov:2022okt,Munch:2022ouq})
\beq\bsp\label{eq:Gauss-Manin}
\rd \bP(\bx,{\balpha}) &\,= \bs\Omega(\bx,\balpha)\bP(\bx,\balpha)\,,\\
\rd \bs{\check{P}}(\bx,{\balpha}) &\,= \bs{\check{\Omega}}(\bx,\balpha)\bs{\check{P}}(\bx,\balpha)\,,\\
\rd\bC(\bx,\balpha) &\,= \bs\Omega(\bx,\balpha)\bC(\bx,\balpha) + \bC(\bx,\balpha)\bs{\check{\Omega}}(\bx,\balpha)^T\,,\\
\rd\bH(\balpha) &\,=0\,,
\esp\eeq
where the matrices $\bs\Omega(\bx,\balpha)$ and $\bs{\check{\Omega}}(\bx,\balpha)$ are matrices of rational one-forms of $\bx$ and they depend rationally on $\balpha$. Finally, we emphasise that the four matrices $\bP$, $\bs{\check{P}}$, $\bC$ and $\bH$ are not independent, but they are related by the \textit{twisted Riemann bilinear relations} (TRBRs)~\cite{Cho_Matsumoto_1995}
\beq\bsp
\label{generalriemann}
 %   \label{generalriemann2}
 {(2\pi i)^{-n}}\bs{P}(\bx,\balpha) \left(\bs{H}(\balpha)^{-1}\right)^T \bs{\check{P}}(\bx,\balpha)^T&\, =    \bs{C}(\bx,\balpha) \, . 
\esp\eeq

For non-maximal cuts or uncut Feynman integrals, the framework of twisted cohomology is not applicable in the way stated above, but we need to use  \textit{relative twisted cohomology},\footnote{In many applications one can also instead  deform the exponents of the unregulated poles and recover the same results as in the relative cohomology from a certain limit \cite{Brunello:2023rpq,Duhr:2024rxe}.} because the rational form $\varphi$ then collects  poles coming from the propagators, as well as possible numerator factors that are not regulated by the twist. 
The same conclusion can be reached by working directly in momentum space~\cite{Caron-Huot:2021xqj}. We will not give a detailed review of relative twisted cohomology, but refer to the literature ~\cite{matsumoto_relative_2019-1,Caron-Huot:2021xqj,Caron-Huot:2021iev}. We just note that most of the concepts and objects for twisted cohomology introduced above appear in a very similar way in relative twisted cohomology and the main difference lies in the way we choose and regulate the bases of twisted cycles and cocyles \cite{Caron-Huot:2021xqj,Caron-Huot:2021iev}.

%%%%%%%%%%%%%%%%%%%%%%%%%%%%%%%%%%%%%%%%%%%%%

\subsection{Self-duality from twisted cohomology}
\label{sec:twisted-self-duality}
We have already mentioned that non-maximally cut Feynman integrals require the framework of relative twisted cohomology, while maximal cuts can be treated within the non-relative framework reviewed in the previous subsection. 
We now introduce a property of non-relative cohomology that leads to a direct definition of a notion of self-duality. 

If condition~\eqref{restrict} holds, the dual (co-)homology groups are given by
\beq\bsp
H_{\text{dR},c}^n \left(X, \check{\nabla}\right) &= \{ \text{compactly supported } n-\text{forms } \check{\varphi} \, :\, \check{\nabla} \check{\varphi} =0\}/ \{\text{exact forms}\}\,,\\
H_n^{\text{lf}}(X,\mathcal{L}) &=  \{\check{\gamma} \otimes \Phi^{-1}|_{\check\gamma} \text{ locally finite} \, :\,  \partial \check{\gamma}  =0\} / \{ \text{boundaries }\partial\tilde{\gamma }\} \, .
\esp\eeq
For a given $n$-form $\varphi$, one can compute a compactly supported version in the same cohomology class with the methods described  in refs.~\cite{Mizera:2017rqa,Mastrolia:2018uzb}. We denote this compactly supported version by $\left[\varphi\right]_c$. We can then choose as a basis of the dual twisted cohomology group the compactly supported version of the basis of the twisted cohomlogy group:
\beq\label{eq:check_to_c}
\check{\varphi}_i = [\varphi_i]_c\,.
\eeq
Similarly, a basis cycle can be chosen to be the regularised version of the dual basis cycle~\cite{Cho_Matsumoto_1995}:
\beq\label{eq:check_to_h}
\gamma_i=[\check{\gamma}_i ]_\text{reg}\,.
\eeq
When integrating, we obtain
\beq
\check{P}_{ij}(\bx,\balpha)=\int_{\check{\gamma}_j }\Phi^{-1}\check{\varphi}_i=\int_{\check{\gamma}_j }\Phi^{-1}[\varphi_i]_c=\int_{[\check{\gamma}_j]_\text{reg} }\Phi^{-1}\varphi_i=\int_{\gamma_j}\Phi^{-1}\varphi_i= P_{ij}(\bx,-\balpha)\,.
\eeq
Thus, if condition~\eqref{restrict} holds, the difference between the period matrix and its dual only lies in the choice of the twist $\Phi^{-1}$ instead of $\Phi$. We obtain in this way a self-duality property, which allows us to identify the dual twisted period matrix $\bs{\check{P}}$ with  twisted period matrix $\bP$, up to changing the signs of the exponents $\alpha_i$:
\beq\label{eq:twisted-duality-general}
\bs{\check{P}}(\bx,\balpha) = \bP(\bx,-\balpha)\,,\qquad \textrm{if condition~\eqref{restrict} holds}.
\eeq
This self-duality has several consequences. First, we see that the matrices appearing in the differential equations~\eqref{eq:Gauss-Manin} are related:
\beq\label{eq:self-dual-Omega}
\bs{\check{\Omega}}(\bx,\balpha) = \bOmega(\bx,-\balpha)\,.
\eeq
Second, we see that the TRBRs in eq.~\eqref{generalriemann} now reduce to a quadratic relation for the period matrix (albeit one of them evaluated with $\balpha\to -\balpha$)~\cite{Duhr:2024rxe}:
\beq\bsp
\label{self-dual-riemann}
 %   \label{generalriemann2}
{(2\pi i)^{-n}} \bs{P}(\bx,\balpha) \left(\bs{H}(\balpha)^{-1}\right)^T \bs{P}(\bx,-\balpha)^T&\, =     \bs{C}(\bx,\balpha) \,.
\esp\eeq

Let us conclude this section by discussing how self-duality is reflected at the level of maximal cuts.
In that case, the twist generally is of the form such that all exponents have the form $\alpha_j=\pm\eps-\frac{m_j}{2}$, with $m_j\in\{0,1\}$. Introducing the set $J:=\{ j:m_j=1\}$, we may choose the basis of dual co-cycles to be
\beq
\check{\varphi}_{i}=\left[\varphi_i \prod_{j\in J}L_j(z)^{-1}\right]_c\,. 
\eeq
With this choice eqs.~\eqref{eq:twisted-duality-general} and~\eqref{eq:self-dual-Omega} translate into~\cite{Duhr:2024rxe}
\beq\bsp
\check{\bP}(\bx,\varepsilon)&=\bP(\bx,-\varepsilon)\,,\\
\check{\bOmega}(\bx,\eps)&=\bOmega(\bx,-\eps)\label{dualomegaepsilon}\,,
\esp\eeq
and the TRBRs in eq.~\eqref{self-dual-riemann} reduce to quadratic relations among maximal cuts,
\beq\bsp
\label{FI:self-dual-riemann}
 %   \label{generalriemann2}
{(2\pi i)^{-n}} \bs{P}(\bx,\eps) \left(\bs{H}(\eps)^{-1}\right)^T \bs{P}(\bx,-\eps)^T&\, =     \bs{C}(\bx,\eps) \,.
\esp\eeq

\section{$\eps$-factorised differential equations and their properties}
\label{sec:eps-fac-DEQ}

\subsection{$\eps$-factorised and canonical differential equations}
\label{sec:eps-fac-DEQ_def}
The observation of self-duality from ref.~\cite{Pogel:2024sdi} asserts that the diagonal blocks of the matrix defining the differential equations in canonical form have the symmetry in eq.~\eqref{eq:weinzierl_sd}. Our goal is to see if we can connect this observation to the self-duality that naturally arises from twisted cohomology, thereby getting some hint on the validity of the observation, and ideally even proving it. 

At this point, however, we need to address an issue: Equation~\eqref{eq:weinzierl_sd} expresses a property of the differential equation satisfied by the maximal cuts when rotated to the canonical basis $\bG(\bx,\eps)$. The existence of such a basis, however, is still conjectural, and so it is hard to know if such a basis even exists, and if it does, what its properties are in the general case. We therefore take a different approach: we will define a class of differential equations that captures (some of) the properties that we know to hold for all known examples of differential equations in canonical form, and that is sufficient to provide rigorous proofs. In this way we can guarantee that our proofs are solid and do not rely on the conjectured existence of a system of differential equations in canonical form. 

Let us describe the class of differential equations that we want to consider. Let $\KK$ be an algebraic number field, i.e., a field obtained by adjoining to $\mathbb{Q}$ a finite number of solutions of an algebraic equation. We consider a differential equation of the form
\beq\label{eq:eps-factorised}
\rd \bG(\bx,\eps) = \eps\bA(\bx)\bG(\bx,\eps)\,,
\eeq
where $\bG(\bx,\eps)$ is an $N\times N$ matrix, and the entries of the matrix $\bA(\bx)$ are one-forms. We call a differential equation as in eq.~\eqref{eq:eps-factorised} \emph{$\eps$-factorised}.\footnote{In the literature the names \emph{$\eps$-factorised systems} and \emph{systems in canonical form} are often used interchangeably.  Since the factorisation of $\eps$ is only one of the properties one expects to hold, we prefer to keep the two concepts distinct.} Note that, due to the flatness condition~\eqref{eq:flatness_canonical}, all differential forms must be closed. We assume that around every point we can find a local coordinate chart $\bx$ such that the entries of $\bA(\bx)$ take the form
\beq\label{eq:A_ij_form}
A_{ij}(\bx) = \sum_{k=1}^r\rd x_k\,f_{ijk}(\bx)\,,
\eeq 
where the $f_{ijk}(\bx)$ are chosen from a $\KK$-algebra $\cA$ of functions. We assume that $\cA$ is \emph{differentially closed}, by which we mean that for all $f\in\cA$, we have $\partial_{x_i}f\in\cA$, for all $1\le i\le r$ (if $\cA$ is not differentially closed, we can enlarge it by adding the necessary derivatives). We also assume that the field of constants of $\cA$ is $\KK$,
\beq
\Const(\cA) = \big\{f\in \cA:\partial_{x_i}f = 0\,, \textrm{ for all }1\le i\le r\big\} = \KK\,.
\eeq
Here are some examples of algebras $\cA$ that have appeared for Feynman integrals:
\begin{enumerate}
\item If a canonical basis can be found by an algebraic transformation, i.e., the transformation matrix $\bS(\bx,\eps)$ depends algebraically on $\bx$, then $\bA(\bx)$ is a matrix of algebraic one-forms. The algebra $\cA$ is then typically a ring of rational functions with a prescribed set of poles, to which one adjoins one or more square roots of polynomials. $\cA$ is obviously differentially closed (because the derivative of an algebraic function is algebraic).
\item In the case where $r=1$ the functions $f_{ijk}$ in eq.~\eqref{eq:A_ij_form} may be (meromorphic) modular forms. The algebra $\cA$ then contains the algebra of (meromorphic) modular forms. The latter, however, is not closed, because derivatives of modular forms are not modular. The algebra of quasi-modular forms contains the algebra of modular forms and is closed under differentiation~\cite{Zagier2008}. We therefore choose $\cA$ to be the algebra of (meromorphic) quasi-modular forms. See also example~\ref{ex2f1} below.
\end{enumerate}
Other examples involve functions on the moduli space of elliptic curves with marked points~\cite{Bogner:2019lfa,Giroux:2024yxu} and rings of algebraic functions extended by periods and quasi-periods of one-parameter families of Calabi-Yau varieties~\cite{Pogel:2022ken,Pogel:2022vat,Pogel:2022yat}.

We denote the $\KK$-vector space of differential forms of the form~\eqref{eq:A_ij_form} by $\Omega^1(\cA)$, and we denote the subspace of closed forms by
\beq
Z^1(\cA) = \big\{\omega\in \Omega^1(\cA):\rd \omega=0\big\}\,.
\eeq
%We also consider the subspaces $\Omega^1_{\log}(\cA)$ and $Z^1_{\log}(\cA)$ of $\Omega^1(\cA)$ and $Z^1(\cA)$ of differential forms with at most logarithmic singularities. 
%
$Z^1(\cA)$ may be infinite-dimensional. However, the entries of $\bA(\bx)$ generate a finite-dimensional subspace $\mathbb{A}$ of $Z^1(\cA)$. 
Let us fix a basis $\omega_1,\ldots,\omega_p$ of $\mathbb{A}$. The $\omega_i$ are sometimes called the \emph{letters} in the context of Feynman integrals. We can then expand $\bA(\bx)$ in that basis:
\beq
\bA(\bx) = \sum_{i=1}^p\bA_i\omega_i\,,
\eeq
where the $\bA_i$ are constant $N\times N$ matrices with entries in $\KK$.

Finally, we define $\cA_{\eps} := \cA\otimes_{\KK} \KK(\eps)$ and $\cF:=\Frac(\mathbb{C}\otimes_{\KK}\cA)$. $\cA_{\eps}$  contains the algebra $\cA$, but we now allow coefficients that are rational functions in $\eps$, and $\cF$ is the field of fractions of the algebra $\cA$, where we also enlarge the constants to be complex numbers.
We can now define our class of differential equations. 
\begin{definition}\label{def:canon}
An $\eps$-factorised differential equation is in C-form if $\mathbb{A}\cap \rd\cF = \{0\}$.
\end{definition} 
As far as we know, this definition applies to all known examples of systems in canonical form that have appeared for Feynman integrals, i.e., all known examples of systems in canonical form are also in C-form.
We expect that systems in canonical form may have additional properties, e.g., we expect them to have at most logarithmic singularities. The class of differential equations from Definition~\ref{def:canon}, however, is sufficient to prove our results and to connect them to self-duality. At the same time, since all known examples of systems in canonical form are also in C-form, the results from subsequent sections can immediately be transferred to the known systems in canonical form. 
We now discuss two examples that illustrate the concepts introduced in this section so far.

\begin{examplenew}[Gauss hypergeometric function]{1}\label{ex2f1}
As a first example, we consider the family of integrals 
\begin{align}\label{eq:T_def}
T(n_1,n_2,n_3)=\int_0^1 \rd z\, z^{-\tfrac{1}{2}+n_1+a \varepsilon}(1-z)^{-\tfrac{1}{2}+n_2+b \varepsilon}(1-xz)^{-\tfrac{1}{2}+n_3+c \varepsilon}\,,
\end{align}
with $a,b,c\in\mathbb{Q}\setminus\{0\}$ and $n_i \in \mathbb{Z}$. The integrals $T(n_1,n_2,n_3)$ can be expressed in terms of Gauss' ${_2}F_1$ function, up to a prefactor that is a ratio of $\Gamma$ functions. While the class of integrals defined by eq.~\eqref{eq:T_def} is not directly a family of Feynman integrals, it is the prime example of a family of integrals that can be described in terms of twisted cohomology, and so it is the simplest example to illustrate all concepts.

 A choice of master integrals of this family is given by $\bF(x,\varepsilon)=\big(T(0,0,0),T(1,0,0)\big)^T$, and it satisfies a differential equation of the type $\rd\bF(x,\eps) =\bOmega(x,\eps)\bF(x,\eps)$, where $\bOmega(x,\eps)$ is a matrix of rational one-form in $x$ and it is also rational in $\eps$. For the explicit expression of $\bOmega(x,\eps)$, we refer, e.g., to ref.~\cite{Broedel:2018rwm}. This differential equation is not in canonical form. It is possible to define a new basis of master integrals $\bG(x,\varepsilon)=\bS_T(x,\varepsilon)^{-1}\bF(x,\varepsilon)$ such that the differential equation for $\bG(x,\eps)$ is $\eps$-factorised:
 \begin{align}
\rd\bG(x,\varepsilon)=\varepsilon \bA_{T}(x)\bG(x,\varepsilon) \,.
\end{align}
%  with
%\begin{align}
%F_1(=T(0,0,0)\,, \quad F_2=T(1,0,0)\,.
%\end{align}
The transformation matrix is explicitly given by
\cite{Broedel:2018rwm}
\begin{align}\label{eq:ST_def}
S_T(x,\varepsilon)=\left(
\begin{array}{cc}
 2 \K(x) & 0 \\
 \frac{4 \K(x)^2 (a (x+1) \varepsilon +b \varepsilon +c x \varepsilon +1)-4 \K(x) \E(x)+\varepsilon }{2 x \K(x) (2 \varepsilon  (a+b+c)+1)} & \frac{\varepsilon }{2 x \K(x) (2 \varepsilon  (a+b+c)+1)} \\
\end{array}
\right)\,,
\end{align}
where we defined the complete elliptic integrals of the first and second kind,
\beq
\K(x) = \int_0^1\frac{\rd }{\sqrt{(1-xz^2)(1-z^2)}}\textrm{~~~and~~~}\E(x) = \int_0^1\rd z\,\sqrt{\frac{1-xz^2}{1-z^2}}\,.
\eeq
%leads to a new basis of master integrals $G(z,\varepsilon)=(G_1,G_2)^T=S_T(z,\varepsilon)^{-1}F(z,\varepsilon)$ which satisfies a differential equation in $\varepsilon$-form
%\begin{align}
%dG(z,\varepsilon)=\varepsilon A_{T,G}(z)G(z,\varepsilon) \,.
%\end{align}
In this case, $\mathcal{A}_{T}$ is the (differentially closed) $\mathbb{Q}$-algebra  
\begin{align}
\mathcal{A}_{T}=\mathbb{Q}\left[i\pi^{\pm1},\frac{1}{x},\frac{1}{1-x},x,\K(x),\E(x),\frac{1}{\K(x)}\right]\,.
\end{align}
In order to describe the matrix $\bA_T$, it is useful to change variables to
\beq
\tau = i\,\frac{\K(1-x)}{\K(x)}\,.
\eeq
The inverse of this relation is the modular $\lambda$ function, 
with
\beq
x=\lambda(\tau) = \frac{\theta_2(\tau)^4}{\theta_3(\tau)^4}\,,
\eeq
which is a Hauptmodul for the congruence subgroup $\Gamma(2)\subseteq \SL(2,\mathbb{Z})$, with
\beq
\Gamma(N) = \left\{\gamma \in\SL(2,\mathbb{Z}) : \gamma = \mathds{1}\!\!\!\!\mod N\right\}\,.
\eeq
Here $\theta_i(\tau)$ are the usual Jacobi $\theta$-functions. The matrix $\bA_T$ then takes the form
\beq\label{A2f1}
\bA_{T}(\tau)=\bA_{T,1}\,\omega_{T,1}+\bA_{T,2}\,\omega_{T,2}+\bA_{T,3}\,\omega_{T,3}\,,
\eeq with 
\beq\bsp\label{eq:omega_2F1}
\omega_{T,1}&\,=\frac{\rd \tau}{2 i\pi}\,,\\
 \omega_{T,2}&\,=\frac{\rd \tau}{2 i\pi}\,h_2(\tau)\,4\left[(a+b+(c-a)\lambda(\tau))\right]\,,\\
\omega_{T,3}&\,=\frac{\rd \tau}{2 i\pi}\,h_2(\tau)^2\,16\left[(a+b)^2+(a+c)^2 \lambda(\tau)^2-2(a^2+ab+ac-bc)\lambda(\tau)\right]\,,
\esp\eeq
and $h_2(\tau) = \tfrac{\pi^2}{4}\theta_3(\tau)^4 = \K(\lambda(\tau))^2$ is an Eisenstein series of weight two for $\Gamma(2)$. The matrices $\bA_{T,i}$ are given by
\begin{align}\label{singlea2f1}
\bA_{T,1}=\left(
\begin{smallmatrix}
 \phantom{-}1 &  \phantom{-}1 \\
 -1 & -1 \\
\end{smallmatrix}
\right)\,,
\qquad \bA_{T,2}=\left(
\begin{smallmatrix}
 1 & 0 \\
 0 & 1 \\
\end{smallmatrix}
\right)\,, \qquad
\bA_{T,3}=\left(
\begin{smallmatrix}
 0 & 0 \\
 1 & 0 \\
\end{smallmatrix}
\right)\,.
\end{align}
The vector space $\mathbb{A}_T$ is generated by the differential forms in eq.~\eqref{eq:omega_2F1},
\begin{align}
\mathbb{A}_{T}=\left\langle\omega_{T,1},\,\omega_{T,2},\,\omega_{T,3}\right\rangle_{\mathbb{Q}}\,.
\end{align}
If we pass to the variable $\tau$, then also $\cA_T$ can be described in terms of modular forms. However, since the algebra of modular forms is not differetially closed, we need to pass to the algebra of quasi-modular forms~\cite{Zagier2008}. More precisely, one sees that $\cA_T$ is the algebra of weakly\footnote{Weakly holomorphic modular forms are allowed to have poles at the cusps.} holomorphic quasi-modular forms for the congruence subgroup $\Gamma(2)$\footnote{Note that entries of $\bS_T(x,\eps)$ contain odd powers of $\K(x)$. The latter are not modular forms for $\Gamma(2)$, because $\Gamma(2)$ does not have modular forms of odd weights. $\K(x)$ defines a modular form of weight 1 for the subgroup $\Gamma(4)$.} with $i\pi^\pm$ adjoined. Finally, using the same strategy as in refs.~\cite{10.1093/imrn/rnaa060,10.2140/ant.2017.11.2113}, one may check that $\bbA_T\cap \rd\ccF_T=\{0\}$, with $\ccF_T = \Frac(\mathbb{C}\otimes_{\mathbb{Q}}\cA_T)$, and so this system is in C-form.
\end{examplenew}
\begin{examplenew}[Unequal-mass sunrise integral]{2}
\label{exSunrise}
    As a second example let us consider the two-loop sunrise integral with unequal internal masses in $D=2-2\varepsilon$ dimensions,
    \begin{equation}
        I_{\nu_1,\nu_2,\nu_3}=e^{2\gamma_\mathrm{E}\varepsilon}\int\frac{\rd^D \ell_1}{i\pi^{D/2}}\int\frac{\rd^D \ell_2}{i\pi^{D/2}}\frac{1}{D_1^{\nu_1}D_2^{\nu_2}D_3^{\nu_3}} \,,
    \end{equation}
    with the propagators
    \begin{equation}
        D_1=-\ell_1^2+m_1^2\,,\qquad D_2=-\ell_2^2+m_2^2\,,\qquad D_3=-(p-\ell_1-\ell_2)^2+m_3^2 \,.
    \end{equation}
    Here $p$ is the external momentum and $m_1,m_2,m_3$ are the internal masses.
    By dimensional analysis this integral depends on three variables, which can be taken as
    \begin{equation}
        y_0=\frac{p^2}{m_3^2}\,,\qquad y_1=\frac{m_1^2}{m_3^2}\,,\qquad y_2=\frac{m_2^2}{m_3^2}\,.
    \end{equation}
    By studying the maximal cut one can easily see that this integral is connected to an elliptic curve. It turns out to be convenient to change variables from $(y_0,y_1,y_2)$ to the variables $\bx=(\tau,z_1,z_2)$, where $\tau$ is the modular parameter of the elliptic curve and $z_1,z_2$ are two punctures on it.

    The unequal-mass sunrise integral has 7 master integrals, 4 of which in the top sector. A canonical form for the differential equations satisfied by all 7 integrals was achieved in ref.~\cite{Bogner:2019lfa}. The explicit change of basis to the canonical basis $\bJ_\mathrm{S}(\bx,\eps)$ was described in ref.~\cite{Bogner:2019lfa}, and in this basis the differential equations take the form,
    \begin{equation}
        \rd \bJ_\mathrm{S}(\bx,\eps)=\eps\bA_\mathrm{S}(\bx)\bJ_\mathrm{S}(\bx,\eps) \,,
    \end{equation}
    where $\bm{x}=(\tau,z_1,z_2)$. The entries of the differential equation matrix $\bA_\mathrm{S}(\bx)$ can be written in terms of the Eisenstein-Kronecker forms \cite{Broedel:2018iwv}
    \begin{equation}\label{eq:omega_EK}
        \omega^{(k)}_{\mathrm{EK}}(z,\tau)=(2\pi)^{2-k}\left[ g^{(k-1)}(z,\tau)\rd z+(k-1)g^{(k)}(z,\tau)\frac{\rd\tau}{2\pi i} \right]\,,
    \end{equation}
    evaluated at the points $z=z_1,z_2,z_3=1-z_1-z_2$ and with modular parameters $\tau$ or $2\tau$, as well as two modular forms $\eta_2(\tau)\rd\tau,\eta_4(\tau)\rd\tau$ for $\Gamma_0(2)$ and $\mathrm{SL}(2,\mathbb{Z})$, respectively (see ref.~\cite{Bogner:2019lfa} for their explicit definition), with
    \beq
    \Gamma_0(N) = \left\{\left(\begin{smallmatrix}a&b\\c&d\end{smallmatrix}\right)\in \SL(2,\mathbb{Z}) : c=0\!\!\!\!\mod N \right\}\,.
    \eeq

    The entries of the $\varepsilon$-factorised differential equation matrix $\bA_\mathrm{S}(\bx)$ span the space $\mathbb{A}_{\mathrm{S}}$ which is a subspace of the differentially closed algebra
    \begin{equation}
        \mathcal{A}_{\mathrm{S}}=\mathbb{Q}\big[i\pi^{\pm1}\big]\otimes_{\mathbb{Q}}\mathrm{QM}(\Gamma_0(2))\otimes_{\mathbb{Q}}\mathrm{G_{EK}}
    \end{equation}
    consisting of products of weakly holomorphic quasi-modular forms for $\Gamma_0(2)$ and (derivatives of) coefficients of the Eisenstein-Kronecker series \cite{Broedel_2018},
    \begin{equation}
        \mathrm{G_{EK}}=\left\langle \partial_z^kg^{(m)}(z_i,j\tau),\,\, m\geq0,\,k\geq 0,\,i=1,2,3,\,j=1,2\right\rangle_{\mathbb{Q}} \,.
    \end{equation}
    Iterated integrals built out of the differential forms~\eqref{eq:omega_EK} evaluate to elliptic polylogarithms and iterated integrals of modular forms. One can show that the latter are linearly independent over $\ccF_{\mathrm{S}} = \Frac(\mathbb{C}\otimes_{\mathbb{Q}}\mathcal{A}_{\mathrm{S}})$ (cf., e.g.,~refs.~\cite{10.1093/imrn/rnaa060,10.2140/ant.2017.11.2113,Broedel_2018,zerbini_enriquez}), from which it follows that $\bbA_{\mathrm{S}}\cap\rd\ccF_{\mathrm{S}}=\{0\}$, and so the system is in C-form (see also the next subsection).
\end{examplenew}

\subsection{Linear independence of iterated integrals}
\label{sec:line_dep}

In the next subsection, we want to prove our main result, namely, that the cohomology intersection matrix is constant if the period matrix and its dual have a C-form differential equation. For this it will be important to know when iterated integrals are linearly independent over the function field $\cF$. In the case of a single variable, $r=1$, the answer to this question was given in ref.~\cite{deneufchatel:hal-00558773}, where it was shown that the iterated integrals in eq.~\eqref{eq:iterated_int_def} are linearly independent if and only if $\mathbb{A}\cap\rd\cF=\{0\}$. Since this last condition is the defining property of a system in C-form (cf.~Definition~\ref{def:canon}), all iterated integrals that arise in the $\eps$-expansion of the path-ordered exponential in eq.~\eqref{eq:Pexp_def} are linearly independent over $\cF$, at least for $r=1$. 

In the case of several variables, $r>1$, the situation is less straightforward. The proof of ref.~\cite{deneufchatel:hal-00558773} relies on the existence of a basis for the specific
iterated integrals that appear in the $\eps$-expansion of the path-ordered exponential.
For $r>1$, the individual iterated integrals that arise in the $\eps$-expansion of the path-ordered exponential are in general not homotopy-invariant, but they depend on the details of the path $\gamma$. The flatness condition~\eqref{eq:flatness} ensures that only homotopy-invariant combinations of iterated integrals show up in the expansion.  
%In the following we show how we can describe these combinations and we study their linear independence.
In the following, we want to argue that we can identify a basis for the homotopy-invariant combinations. Hence similarly to the single variable case we can conclude that all appearing combinations are linearly independent.

Let us start by discussing the flatness condition~\eqref{eq:flatness_canonical}. Expanding $\bA(\bx)$ into the basis of $\mathbb{A}$, we find
\beq\label{eq:Lie_1}
\sum_{i,j=1}^p[\bA_i,\bA_j]\,\omega_i\wedge\omega_j = 0\,.
\eeq
Let $\eta_k$ be a basis for the vector space $Z^2(\cA)$ of closed two-forms. Then we can expand $\omega_i\wedge\omega_j$ into that basis
\beq
\omega_i\wedge\omega_j = \sum_{k=1}^qa_{ijk}\eta_k\,,\quad a_{ijk}\in\KK\,.
\eeq
We note that, for $1\leq i,j \leq p$, we can always pick the basis $\eta_k$ such that $q$ is finite (because the $\omega_i\wedge\omega_j$ generate a finite-dimensional vector space).
Equation~\eqref{eq:Lie_1} then becomes
\beq
\sum_{k=1}^q\eta_k\sum_{i,j=1}^pa_{ijk}\,[\bA_i,\bA_j] = 0\,,
\eeq
and since the $\eta_k$ form a basis, we must have
\beq\label{eq:A_commutators}
\sum_{i,j=1}^pa_{ijk}\,[\bA_i,\bA_j] = 0\,,\qquad 1\le k\le q\,.
\eeq

In order to understand which combinations of iterated integrals are homotopy-invariant, it is useful to package the iterated integrals into a generating functional (cf.~eq.~\eqref{eq:Pexp_def_exp}):
\beq\label{eq:genfunc_original}
\mathbb{G} ={1} + \sum_{k=1}^\infty \sum_{1\le i_1,\ldots,i_k\le p}t_{i_1}\cdots t_{i_k}  I_{\gamma}(\omega_{i_1},\ldots,\omega_{i_k}) \,,
\eeq
where the $t_i$, $1\le i\le p$, are non-commuting variables that satisfy the same commutation relations as those satisfied by the matrices $\bA_i$ in eq.~\eqref{eq:A_commutators},\footnote{We will see in section~\ref{sec:Lie} that the matrices $\bA_i$ can be seen as a representation of some Lie algebra generated by the $t_i$.}
\beq\label{eq:t_commutators}
\sum_{i,j=1}^pa_{ijk}\,[t_i,t_j] = 0\,,\qquad 1\le k\le q\,.
\eeq
The generating functional $\mathbb{G}$ satisfies the differential equation
\beq\label{eq:gen_func_DEQ}
\rd\mathbb{G} = \left(\sum_{i=1}^pt_i\omega_i \right)\mathbb{G}\,,
\eeq
and the commutation relations in eq.~\eqref{eq:t_commutators} guarantee that the connection is flat. The variables $t_i$ generate a $\KK$-algebra ${A}_{\mathbb{A}}$, and identifying the combinations of iterated integrals that are homotopy-invariant is equivalent to identifying a basis for ${A}_{\mathbb{A}}$. It turns out that ${A}_{\mathbb{A}}$ belongs to the class of so-called \emph{quadratic algebras}, and for quadratic algebras a basis can be found in an algorithmic 
fashion. In the following, we only need the existence of a basis $\cB_{\bbA}$ for $A_{\bbA}$.
In appendix~\ref{app:DDMS_1} we review how to find such a basis. Having fixed a basis for $A_{\bbA}$, we can  express all the words $t_{i_1}\cdots t_{i_k}$ in terms of this basis, and   the generating functional can be written as
\beq \label{expgenfunc}
\mathbb{G} = \sum_{w\in \mathcal{B}_{\mathbb{A}}}w\,J(w)\,,
\eeq
where $J(w)$ denotes a combination of iterated integrals that is homotopy-invariant. The $\KK$-vector space of all homotopy-invariant combinations of iterated integrals one can form out of the letters from $\mathbb{A}$ is then
\beq\label{eq:VA_def}
V_{\mathbb{A}} := \left\langle J(w): w\in \mathcal{B}_{\mathbb{A}}\right\rangle_{\KK}\,.
\eeq

Having identified a basis for $A_{\bbA}$, in appendix~\ref{app:DDMS_2} we prove the following result:
\begin{theorem}\label{thm:lin_dep}
%Let $\mathcal{C}$ a differentially closed field of functions in $r$ variables with subfield of constants $\KK$, and $\mathbb{A}$ is the $\KK$-vector space generated by the closed differential forms $\omega_1,\ldots,\omega_p$, with $\omega_i = \sum_{k=1}^p\rd x_i\,f_i(\bx)$, with $. 
We assume the notations and conventions from above hold. Then, the following statements are equivalent:
\begin{enumerate}
\item $\mathbb{A}\cap \rd\cF = \{0\}$,
\item the functions $J(w)$, $w\in \mathcal{B}_{\mathbb{A}}$, are linearly independent over $\cF$.
\end{enumerate}
\end{theorem}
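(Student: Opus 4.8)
The plan is to prove the two implications separately. For $\lnot(1)\Rightarrow\lnot(2)$ I would argue directly. Since the defining relations of $A_{\mathbb{A}}$ are homogeneous of degree two, $A_{\mathbb{A}}$ is graded by word length; I take a basis $\mathcal{B}_{\mathbb{A}}=\bigsqcup_{m\ge0}\mathcal{B}_{\mathbb{A}}^{(m)}$ adapted to this grading (one exists, e.g.\ by the construction in appendix~\ref{app:DDMS_1}), write $|w|=m$ for $w\in\mathcal{B}_{\mathbb{A}}^{(m)}$, and note $\mathcal{B}_{\mathbb{A}}^{(0)}=\{1\}$ and $\mathcal{B}_{\mathbb{A}}^{(1)}=\{t_1,\dots,t_p\}$, so eq.~\eqref{eq:genfunc_original} gives $J(1)=1$ and $J(t_i)=I_\gamma(\omega_i)$. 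Now if $\mathbb{A}\cap\rd\cF\ne\{0\}$, choose a nonzero $\omega=\sum_{i=1}^p\lambda_i\omega_i\in\mathbb{A}$ (so some $\lambda_i\in\KK$ is nonzero) with $\omega=\rd g$, $g\in\cF$. Then $\sum_i\lambda_iJ(t_i)=\int_\gamma\omega=\int_\gamma\rd g=g-g(\bx_0)$ with $g(\bx_0)\in\mathbb{C}\subseteq\cF$, so $\sum_i\lambda_iJ(t_i)-(g-g(\bx_0))J(1)=0$ is a nontrivial $\cF$-linear relation among the $J(w)$; this proves $(2)\Rightarrow(1)$.

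For $(1)\Rightarrow(2)$ I would generalise the single-variable argument of ref.~\cite{deneufchatel:hal-00558773} by a minimal-counterexample induction on weight. For $1\le i\le p$ and $v\in\mathcal{B}_{\mathbb{A}}^{(m)}$, expand $t_iv=\sum_{u\in\mathcal{B}_{\mathbb{A}}^{(m+1)}}\lambda^u_{i,v}\,u$ with $\lambda^u_{i,v}\in\KK$. Substituting $\mathbb{G}=\sum_w wJ(w)$ into eq.~\eqref{eq:gen_func_DEQ} and comparing coefficients of the basis words (with $\omega_i=\sum_k\rd x_k\,f_{ik}$, $f_{ik}\in\cA$) yields the first-order system
\beq\label{eq:plan_system}
\partial_{x_k}J(u)=\sum_{i=1}^p f_{ik}\sum_{v\in\mathcal{B}_{\mathbb{A}}^{(|u|-1)}}\lambda^u_{i,v}\,J(v)\,,\qquad 1\le k\le r\,,
\eeq
so $\partial_{x_k}$ sends weight-$n$ integrals into $\cA$-linear combinations of weight-$(n-1)$ ones. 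I would then prove by induction on $n$ that every $\cF$-linear relation $\sum_{|w|\le n}f_wJ(w)=0$ is trivial. The case $n=0$ is $J(1)=1$. For the step, assuming the claim for $n-1$ and a nontrivial relation of weight exactly $n$, pick one with the fewest weight-$n$ words appearing, and rescale so $f_{w_0}=1$ for some such word. Applying $\partial_{x_k}$ and using~\eqref{eq:plan_system}, the coefficient of $J(u)$ in the derived relation is $g^{(k)}_u=\partial_{x_k}f_u+\sum_i f_{ik}\sum_{w\in\mathcal{B}_{\mathbb{A}}^{(|u|+1)}}\lambda^w_{i,u}f_w\in\cF$; for weight-$n$ words the second sum vanishes (no words of weight $>n$ occur), so $g^{(k)}_{w_0}=\partial_{x_k}(1)=0$ and the derived relation has strictly fewer weight-$n$ words, hence is trivial by minimality (if still of weight $n$) or by the induction hypothesis (if of lower weight). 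Thus $g^{(k)}_u=0$ for all $u$ and all $k$.

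From $g^{(k)}_u=0$ I would extract a contradiction. For $|u|=n$ it gives $\partial_{x_k}f_u=0$ for all $k$, hence $f_u\in\Const(\cF)=\mathbb{C}$ (using $\Const(\cA)=\KK$). For $|u|=n-1$ it gives $\rd f_u=-\sum_i c^{(u)}_i\,\omega_i$ with $c^{(u)}_i:=\sum_{w\in\mathcal{B}_{\mathbb{A}}^{(n)}}\lambda^w_{i,u}f_w\in\mathbb{C}$, so $\sum_i c^{(u)}_i\omega_i\in\rd\cF$. Here I would invoke a base-change lemma, namely that $\mathbb{A}\cap\rd\cF=\{0\}$ implies $(\mathbb{C}\otimes_\KK\mathbb{A})\cap\rd\cF=\{0\}$ — proved by descending to a finitely generated subextension $\KK'\subseteq\mathbb{C}$ of $\KK$, then (for the finite part) taking a norm/Galois average of the denominator of $g\in\Frac(\KK'\otimes_\KK\cA)$ to reduce to $\cA$-coefficients and expanding in a $\KK$-basis of $\KK'$ — together with the $\mathbb{C}$-linear independence of the $\omega_i$ (inherited from their $\KK$-linear independence), to conclude $c^{(u)}_i=0$ for all $i$ and all $u\in\mathcal{B}_{\mathbb{A}}^{(n-1)}$. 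Finally, $A_{\mathbb{A}}$ is generated in degree one, so $(A_{\mathbb{A}})_n$ is spanned by the products $t_iu$, $u\in\mathcal{B}_{\mathbb{A}}^{(n-1)}$; the vanishing $\sum_{w\in\mathcal{B}_{\mathbb{A}}^{(n)}}\lambda^w_{i,u}f_w=0$ then says the linear functional $w\mapsto f_w$ is zero on a spanning set of $(A_{\mathbb{A}})_n$, so $f_w=0$ for every weight-$n$ word $w$, contradicting that the relation had weight exactly $n$.

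I expect the main obstacle to be exactly what separates $r>1$ from $r=1$: one is forced to work throughout with the homotopy-invariant combinations $J(w)$ indexed by a homogeneous basis of the quadratic algebra $A_{\mathbb{A}}$, because only then does differentiation lower weight in the controlled form~\eqref{eq:plan_system} and only then is ``linear dependence over $\cF$'' well posed — the individual iterated integrals $I_\gamma(\omega_{i_1},\dots,\omega_{i_k})$ are path-dependent for $r>1$. The secondary, more technical, hurdle is the passage from $\KK$- to $\mathbb{C}$-coefficients: it is intuitively obvious that the C-form property does not change when the field of constants is enlarged, but making this rigorous — in particular for transcendental enlargements of $\KK$ — requires the descent argument indicated above.
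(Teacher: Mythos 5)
Your proof is correct and, at heart, follows the same strategy as the paper's appendix~B.2: differentiate a putative minimal $\cF$-linear relation, use the fact that multiplication by $t_i$ (equivalently, the adjoint $t_i^\dagger$) shifts word length by one, and land a nontrivial element of $\bbA\cap\rd\cF$, contradicting the C-form hypothesis. The presentation, however, is genuinely different: the paper runs the argument in a bra--ket calculus with minimality of the leading monomial $\lm(P)$ under the ambient monomial ordering, whereas you set it up as an induction on weight together with an explicit first-order system (your system~\eqref{eq:plan_system} is exactly the coefficient-wise form of the paper's $\rd\langle Q|w\rangle=-\sum_i\omega_i\langle Q|t_iw\rangle$) and a minimal-counterexample count of the number of weight-$n$ terms. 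The final step, that the vanishing of $c_i^{(u)}=\sum_w\lambda_{i,u}^w f_w$ forces $f_w=0$ because $(A_\bbA)_n$ is generated in degree one, replaces the paper's simpler trick of writing $w_0=t_kv$ and specialising to $i=k$; both are valid. The one place where your version is \emph{more} careful than the paper's is the field-of-constants issue: you correctly note that $\Const(\cF)=\mathbb{C}$ (not $\KK$), so the constants $c_i^{(u)}$ one obtains lie in $\mathbb{C}$, and concluding requires both $(\mathbb{C}\otimes_\KK\bbA)\cap\rd\cF=\{0\}$ and the $\mathbb{C}$-linear independence of the $\omega_i$, neither of which is literally the C-form hypothesis; you propose a descent argument to supply these. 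The paper's proof silently writes ``$\alpha_i\in\KK$'' at the analogous point (after its eq.~\eqref{eq:app_proof_x}), which is not a priori justified; so you have in fact identified, and repaired, a small gap in the published argument. Your direction $\lnot(1)\Rightarrow\lnot(2)$ is the contrapositive of the paper's $(2)\Rightarrow(1)$ and is equivalent to it.
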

For the case $r=1$, this theorem reduces to the theorem from ref.~\cite{deneufchatel:hal-00558773}. To our knowledge, the extension of the theorem from ref.~\cite{deneufchatel:hal-00558773} to $r>1$ is new and has not appeared before. The proof of Theorem~\ref{thm:lin_dep} is essentially identical to the proof in ref.~\cite{deneufchatel:hal-00558773} for the case $r=1$, because the proof is purely algebraic. For completeness, we collect the prove in appendix~\ref{app:DDMS_2}.

Theorem~\ref{thm:lin_dep} has the following corollary:
\begin{corollary}\label{thm:main_1}
Consider two systems in C-form for the same algebra $\cA$ and of the same dimension $N$, 
\beq
\rd \bG_i(\bx,\eps) = \eps\bA^{(i)}(\bx)\bG_i(\bx,\eps)\,,\qquad i=1,2\,.
\eeq
Assume that there are matrices of full rank,
\beq
\bDelta(\bx,\eps) \in \GL(N,\cA_\eps) \textrm{~~~and~~~} \bH(\eps) \in \GL(N,\mathbb{C}(\eps))\,,
\eeq
such that
\beq\label{eq:trbr_proof}
\bG_1(\bx,\eps)\big( \bH(\eps)^{-1}\big)^T\bG_2(\bx,\eps)^T =\bDelta(\bx,\eps)\,.
\eeq
Then $\rd\bDelta(\bx,\eps)=0$.
\end{corollary}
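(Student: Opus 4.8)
The plan is to differentiate the given bilinear relation, recognise $\bDelta$ as a solution of an $\eps$-factorised system assembled from the letters of both equations, and then use Theorem~\ref{thm:lin_dep} to kill everything but the constant part. First, since $\bDelta$ and $\bH(\eps)$ are invertible, $\bG_1$ and $\bG_2$ must be invertible as well, so each is a fundamental solution matrix and can be written as $\bG_i(\bx,\eps) = \bU^{(i)}_\gamma(\bx,\eps)\,\bG_{i,0}(\eps)$ with $\bU^{(i)}_\gamma = \mathbb{P}\exp\!\big[\eps\int_\gamma\bA^{(i)}\big]$ and $\bG_{i,0}$ independent of $\bx$. Differentiating $\bDelta = \bG_1(\bH^{-1})^T\bG_2^T$, using $\rd\bH = 0$ and the two differential equations, gives
\beq
\rd\bDelta(\bx,\eps) = \eps\,\bA^{(1)}(\bx)\,\bDelta(\bx,\eps) + \eps\,\bDelta(\bx,\eps)\,\bA^{(2)}(\bx)^T\,,
\eeq
so $\bDelta$ itself solves an $\eps$-factorised system, and substituting the fundamental-solution form yields $\bDelta = \bU^{(1)}_\gamma\,\bC(\eps)\,(\bU^{(2)}_\gamma)^T$ with $\bC(\eps) = \bG_{1,0}(\eps)(\bH(\eps)^{-1})^T\bG_{2,0}(\eps)^T$ constant in $\bx$.

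Next I would expand this over the basis of homotopy-invariant iterated integrals. Set $\bbA := \bbA^{(1)}+\bbA^{(2)}\subseteq Z^1(\cA)$, pick a basis $\omega_1,\dots,\omega_p$, and write $\bA^{(i)} = \sum_j\bA^{(i)}_j\omega_j$. The flatness condition $\bA^{(i)}\wedge\bA^{(i)} = 0$ together with $\rd\bA^{(i)}=0$ forces the $\bA^{(i)}_j$ to obey the relations~\eqref{eq:A_commutators}, so $t_j\mapsto\eps\bA^{(i)}_j$ extends to an algebra homomorphism $\rho_i$ on $A_\bbA$; applying $\rho_i$ to the generating functional~\eqref{expgenfunc} gives $\bU^{(i)}_\gamma = \sum_{w\in\cB_\bbA}\rho_i(w)\,J(w)$. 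Multiplying the two series and re-expressing each product $J(w)J(w')$ over $\cB_\bbA$ (the $J(w)$ span the algebra $V_\bbA$ of \emph{all} homotopy-invariant combinations) collects everything into
\beq
\bDelta(\bx,\eps) = \sum_{w\in\cB_\bbA}\bD_w(\eps)\,J(w)\,,
\eeq
with each $\bD_w$ constant in $\bx$ and $J(\emptyset)=1$. Now, \emph{provided the combined system is in C-form}, Theorem~\ref{thm:lin_dep} gives that the $J(w)$, $w\in\cB_\bbA$, are linearly independent over $\cF$, hence over $\cF(\eps)$. Since the entries of $\bDelta$ lie in $\cA_\eps\subseteq\cF(\eps)$, i.e.\ in the $\cF(\eps)$-span of $J(\emptyset)$, linear independence forces $\bD_w(\eps)=0$ for every nonempty $w$, so $\bDelta(\bx,\eps)=\bD_\emptyset(\eps)$ and $\rd\bDelta=0$. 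Equivalently, and more elementarily, one may Laurent-expand $\bDelta=\sum_{k\ge k_0}\eps^k\bDelta_k$ with $\bDelta_k$ matrices over $\cA$, match powers of $\eps$ to get $\rd\bDelta_k=\bA^{(1)}\bDelta_{k-1}+\bDelta_{k-1}(\bA^{(2)})^T$ with $\bDelta_{k_0-1}=0$, and induct on $k$: each such right-hand side is then a matrix of $\KK$-linear combinations of the letters $\omega_j$ that is simultaneously exact, hence vanishes by $\bbA\cap\rd\cF=\{0\}$, so $\rd\bDelta_k=0$ and $\bDelta_k$ is constant because $\Const(\cA)=\KK$.

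The step I expect to be the main obstacle is exactly the proviso above: that the combined letter space $\bbA=\bbA^{(1)}+\bbA^{(2)}$ is still in C-form, $\bbA\cap\rd\cF=\{0\}$. This is not a formal consequence of C-form of the two systems separately — a sum $\mu_1+\mu_2$ with $\mu_i\in\bbA^{(i)}$ can be exact while neither $\mu_i$ is. It does hold automatically when the two systems share a common C-form letter space, which is the relevant case when the corollary is fed into the twisted Riemann bilinear relations~\eqref{self-dual-riemann} with the equations and their duals simultaneously put into canonical form, since $\check{\bP}(\bx,\eps)=\bP(\bx,-\eps)$ then gives $\check{\bA}=-\bA$ and hence $\bbA^{(1)}=\bbA^{(2)}$. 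To cover the general statement I would instead argue that the existence of a transformation $\bDelta$ which is \emph{rational} in $\eps$ — not merely a formal power series — is already incompatible with $\bbA\cap\rd\cF\neq\{0\}$: running the Laurent recursion in that case would force $\bDelta$ to contain a factor $\exp(\eps\,g)$ with $g\in\cF\setminus\Const(\cF)$, contradicting $\bDelta\in\GL(N,\cA_\eps)$.
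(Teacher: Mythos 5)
Your main line of argument coincides with the paper's: write each $\bG_i$ as a path-ordered exponential times a constant matrix, observe that the entries of the right-hand side of eq.~\eqref{eq:trbr_proof} then live in the $\mathbb{C}$-span of the homotopy-invariant iterated integrals $J(w)$, expand in $\eps$, and invoke Theorem~\ref{thm:lin_dep} to conclude that each Laurent coefficient, lying simultaneously in $V_{\mathbb{C}}$ and in $\cF$, is a constant. The paper does not repackage the product of generating series over the quadratic-algebra basis as you do, but the logic is the same. Your inductive alternative is a genuinely different and more elementary route: Laurent-expand $\bDelta$, match powers of $\eps$, and argue order by order that $\rd\bDelta_k = \bA^{(1)}\bDelta_{k-1}+\bDelta_{k-1}(\bA^{(2)})^T$ is a matrix with entries in $\bbA$ that is simultaneously exact, hence vanishes. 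This sidesteps the shuffle/quadratic-algebra machinery and the explicit basis $\cB_{\bbA}$ entirely, needing only $\bbA\cap\rd\cF=\{0\}$ and $\Const(\cA)=\KK$; it is arguably cleaner than the route via Theorem~\ref{thm:lin_dep}.

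The gap you flag is real, and it is present in the paper's proof as well: both your $J(w)$ argument and your induction need the \emph{combined} letter space to satisfy $(\bbA^{(1)}+\bbA^{(2)})\cap\rd\cF=\{0\}$, whereas the hypotheses of the Corollary as written only guarantee $\bbA^{(i)}\cap\rd\cF=\{0\}$ separately. The paper applies Theorem~\ref{thm:lin_dep} to $V_{\bbA}$ for the joint letter space without commenting on this. As you observe, in the intended application to the twisted Riemann bilinear relations the two systems share a common $\bbA$ (in particular $\bAc^{\mathrm{m.c.}}=-\bA^{\mathrm{m.c.}}$ for maximal cuts), so the combined space is automatically in C-form; the hypotheses should really state this. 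Your closing suggestion — that $\bDelta\in\GL(N,\cA_\eps)$ already rules out a non-trivial intersection because the recursion would manufacture an $\exp(\eps g)$ factor — is a good heuristic (it is literally what happens for $N=1$, where $\Delta$ is $\exp\!\big(\eps\int(\omega^{(1)}+\omega^{(2)})\big)$ up to a constant) but is not a proof for general $N$: once some $\bDelta_k$ fails to be constant, the right-hand side of the recursion at order $k+1$ no longer has entries in $\bbA$, and the exponential does not split off in any obvious way. I would therefore either state the combined-$\bbA$ condition as an explicit hypothesis or restrict to the case $\bbA^{(1)}=\bbA^{(2)}$, which is all that is needed downstream.
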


\begin{proof}
We know that there are constant matrices $\bG_{0,i}(\eps)\in \KK^{N\times N}$ such that
\beq
\bG_i(\bx,\eps) = \bU_\gamma^{(i)}(\bx,\eps)\bG_{0,i}(\eps)\,,
\eeq
where the $\bU_\gamma^{(i)}(\bx,\eps)$ are path-ordered exponentials as in eq.~\eqref{eq:Pexp_def}. Then eq.~\eqref{eq:trbr_proof} takes the form
\beq\label{eq:riemann_proof_2}
\bDelta(\bx,\eps)  = \bU^{(1)}_{\gamma}(\bx,\eps)\big(\bH_{12}(\eps)^{-1}\big)^T\bU^{(2)}_{\gamma}(\bx,\eps)^T\,,
\eeq
with
\beq
 \bH_{12}(\eps)=\big(\bG_{0,1}(\eps)^{-1}\big)^T\bH(\eps)\bG_{0,2}(\eps)^{-1}\,.
\eeq
%From eq.~\eqref{eq:riemann_proof_2}, we see that if we expand in $\eps$, the Laurent coefficients of the entries of the matrix on the left-hand side lie in $\cA$, while those in the right-hand side contain $\mathbb{C}$-linear combinations of iterated integrals built out of the one-forms $\omega_i$. The idea of the proof is to show that this is only possible if both sides are constant.

We may apply Theorem~\ref{thm:lin_dep}, which has the following immediate corollary. Consider the vector space $V_{\mathbb{C}} := \mathbb{C}\otimes_{\KK}V_{\mathbb{A}}$ (with $V_{\mathbb{A}}$ given in eq.~\eqref{eq:VA_def}). Then
any element of $V_{\mathbb{C}}$ is of the form $v=\sum_{w\in\mathcal{B}_{\mathbb{A}}} c_w J(w)$ with $c_w\in\mathbb{C}$ and $J(w) \in V_{\mathbb{A}}$. If $v$ is also in $\mathcal{F}_{\mathbb{C}}$, then from the linear independence of the $J(w)$ over $\mathcal{F}_\mathbb{C}$ it follows that $c_w = 0 $ for $w\neq \emptyset$ and consequently $v=c_\emptyset\in\mathbb{C}$. Thus:  
\beq
V_{\mathbb{C}} \cap \cF=\mathbb{C}\,.
\eeq

We expand $\bDelta(x,\eps)$ in $\eps$,
\beq
\Delta_{ij}(x,\eps) = \sum_{k=k_0}^\infty \eps^k\,\Delta_{ij}^{(k)}(x)\,,
\eeq
and we know that $\Delta_{ij}^{(k)}(x)\in \cA \subseteq \cF$. From looking at the right-hand side of eq.~\eqref{eq:riemann_proof_2}, we also know that $\Delta_{ij}^{(k)}(x)\in V_{\mathbb{C}}$.  Hence, $\Delta_{ij}^{(k)}(x)\in V_{\mathbb{C}}\cap\cF = \mathbb{C}$, and so $\Delta_{ij}^{(k)}(x)$ is constant. Because the only constants in $\cA$ lie in $\KK$, we have $\Delta_{ij}^{(k)}(x)\in \KK$, as claimed.
\end{proof}

%%%%%%%%%%%%%%%%%%%%%%%%%%%%%%%%%%%%%%%%%%
%%%%%%%%%%%%%%%%%%%%%%%%%%%%%%%%%%%%%%%%%%

\subsection{Interpretation for Feynman integrals}

Let us now discuss how the results from the previous section can be applied to Feynman integrals.
We assume that we have a vector of master integrals $\bI(\bx,\eps)$ obtained from IBP identities that satisfies the differential equation
\beq\label{eq:DEQ_sec4}
\rd\bI(\bx,\eps) = \bOmega(\bx,\eps)\bI(\bx,\eps) \,.
\eeq
We know that Feynman integrals are periods of a (relative) twisted cohomology theory, and a fundamental solution matrix $\bP(\bx,\eps)$ of eq.~\eqref{eq:DEQ_sec4} is the associated period matrix. The connection to twisted cohomology also allows one to define \emph{dual} master integrals $\bs{\check{I}}(\bx,\eps)$, which satisfy the differential equation (cf.~eq.~\eqref{eq:Gauss-Manin})
\beq\label{eq:dDEQ_sec4}
\rd\bIc(\bx,\eps) = \bOmegac(\bx,\eps)\bIc(\bx,\eps) \,.
\eeq
We can compute the intersection matrix $\bC(\bx,\eps)$ for this choice of master integrals and their duals. It will be a rational function of $\bx$, and it satisfies the differential equation~\eqref{eq:Gauss-Manin}. Note that the differential equations~\eqref{eq:DEQ_sec4} and~\eqref{eq:dDEQ_sec4} have essentially the same singularities (except for possibly additional ones stemming from the zeros of $\det(\bC)$), because the matrices defining the differential equations are related by eq.~\eqref{eq:Gauss-Manin}, which can be cast in the form,
\beq\label{eq:Omega_del_relation}
\bOmegac(\bx,\eps)^T = \bC(\bx,\eps)^{-1}\Big(\bOmega(\bx,\eps)\bC(\bx,\eps) - \rd\bC(\bx,\eps)\Big)\,,
\eeq
and the entries of $\bC(\bx,\eps)$ are rational functions. 

We assume that we can find canonical bases for both the differential equation~\eqref{eq:DEQ_sec4} and its dual~\eqref{eq:dDEQ_sec4}, i.e., we assume that we can find changes of basis,
\beq\bsp
\bI(\bx,\eps) = \bS(\bx,\eps)\bJ(\bx,\eps)\,,\\
\bIc(\bx,\eps) = \bSc(\bx,\eps)\bJc(\bx,\eps)\,,
\esp\eeq
such that the differential equations are in $\eps$-factorised form,
\beq\bsp\label{eq:canon_eqs}
\rd\bJ(\bx,\eps) &\,= \eps\bA(\bx)\bJ(\bx,\eps) \,,\\
\rd\bJc(\bx,\eps) &\,= \eps\bAc(\bx)\bJc(\bx,\eps) \,.
\esp\eeq
%where $\bA(\bx)$ and $\bAc(\bx)$ have at most logarithmic singularities. 
Differential equations in canonical form for dual integrals were studied in refs.~\cite{Caron-Huot:2021xqj,Giroux:2022wav,De:2023xue}. We assume that the differential equations~\eqref{eq:canon_eqs} are in C-form (we recall that this is the case for all known examples). 
The entries of $\bA(\bx)$ and $\bAc(\bx)$ allow us to define the algebra $\cA$. 

The intersection matrix computed with the canonical bases is related to the one for the original bases via
\beq
\bDelta(\bx,\eps) = \bS(\bx,\eps)^{-1}\bC(\bx,\eps)\left(\bSc(\bx,\eps)^{-1}\right)^T\,.
\eeq
Note that all entries of $\bDelta(\bx,\eps)$ lie in $\cA_{\eps}$, and that $\bDelta(\bx,\eps)$ satisfies the differential equation
\beq\label{eq:dDelta_dual}
\rd\bDelta(\bx,\eps) = \eps\bA(\bx)\bDelta(\bx,\eps) + \eps\bDelta(\bx,\eps)\bAc(\bx)^T\,.
\eeq
Finally the TRBRs from eq.~\eqref{generalriemann} take the form
\beq
(2\pi i)^{-n}\,\bG(\bx,\eps)\left(\bH(\eps)^{-1}\right)^T\bs{\check{G}}(\bx,\eps)^T = \bDelta(\bx,\eps)\,,
\eeq
where we defined
\beq\bsp
\bG(\bx,\eps) &\,= \bS(\bx,\eps)^{-1}\bP(\bx,\eps)\,,\\
\bs{\check{G}}(\bx,\eps) &\,= \bSc(\bx,\eps)^{-1}\bPc(\bx,\eps)\,.
\esp\eeq
We then see that Corollary~\ref{thm:main_1} applies with $\bA_1(\bx) = \bA(\bx)$ and $\bA_2(\bx) = \bAc(\bx)$, and we conclude that $\rd\bDelta(\bx,\eps)=0$. In other words, we see that after passing to a canonical basis for both the Feynman integrals and their duals, the intersection matrix becomes constant in $\bx$. We can in fact say more about the structure of the intersection matrix:
\begin{theorem}\label{thm:duality}
The matrix $\bDelta(\eps)$ can be written as a product
\beq
\bDelta(\eps) = \bDelta_0\widetilde{\bDelta}(\eps)\,,
\eeq
where $\bDelta_0\in\GL(N,\KK)$ is a constant matrix of full rank and $\widetilde{\bDelta}(\eps)\in\GL(N,\KK(\eps))$ is a matrix of full rank that commutes with $\bAc(\bx)^T$.
\end{theorem}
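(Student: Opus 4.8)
The plan is to combine the vanishing of $\rd\bDelta$ — established above via Corollary~\ref{thm:main_1} — with the differential equation~\eqref{eq:dDelta_dual} satisfied by $\bDelta$, and then extract the constant factor $\bDelta_0$ by a Zariski-density argument over $\KK$. Since $\rd\bDelta(\bx,\eps)=0$ and the entries of $\bDelta$ lie in $\cA_\eps$ while $\Const(\cA)=\KK$, the entries of $\bDelta$ in fact lie in $\KK(\eps)$; moreover $\bDelta(\bx,\eps)=\bS(\bx,\eps)^{-1}\bC(\bx,\eps)(\bSc(\bx,\eps)^{-1})^T$ is a product of full-rank matrices, so $\det\bDelta\in\KK(\eps)$ is nonzero and $\bDelta(\eps)\in\GL(N,\KK(\eps))$. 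Feeding $\rd\bDelta=0$ into eq.~\eqref{eq:dDelta_dual} gives the intertwining relation
\beq\label{eq:plan_intertwine}
\bA(\bx)\,\bDelta(\eps) + \bDelta(\eps)\,\bAc(\bx)^T = 0 \,.
\eeq

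First I would linearise this. Choose a basis $\omega_1,\dots,\omega_p$ of the finite-dimensional $\KK$-subspace of $Z^1(\cA)$ spanned by all entries of $\bA(\bx)$ and $\bAc(\bx)$, and write $\bA(\bx)=\sum_i\bA_i\omega_i$, $\bAc(\bx)=\sum_i\bAc_i\omega_i$ with $\bA_i,\bAc_i\in\KK^{N\times N}$. Since the $\omega_i$ are $\KK$-linearly independent, they remain linearly independent over $\KK(\eps)$, so eq.~\eqref{eq:plan_intertwine} is equivalent to the finite set of purely algebraic equations $\bA_i\bDelta(\eps)+\bDelta(\eps)\bAc_i^T=0$ for $1\le i\le p$. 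Hence $\bDelta(\eps)$ lies in $W\otimes_\KK\KK(\eps)$, where $W:=\{M\in\KK^{N\times N}:\bA_iM+M\bAc_i^T=0\ \text{for all }i\}$ is a $\KK$-vector subspace of $\KK^{N\times N}$ — the solution set of a homogeneous $\KK$-linear system, so that base change to $\KK(\eps)$ indeed produces the $\KK(\eps)$-solutions of the same system. Fixing a $\KK$-basis $M_1,\dots,M_d$ of $W$, write $\bDelta(\eps)=\sum_a c_a(\eps)M_a$ with $c_a(\eps)\in\KK(\eps)$; then the homogeneous polynomial $D(\lambda_1,\dots,\lambda_d):=\det\!\big(\sum_a\lambda_aM_a\big)\in\KK[\lambda_1,\dots,\lambda_d]$ is not identically zero, because $D(c_1(\eps),\dots,c_d(\eps))=\det\bDelta(\eps)\neq0$. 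As $\KK$ is a number field, hence infinite, there exist $\lambda_a^0\in\KK$ with $D(\lambda^0)\neq0$, and I set $\bDelta_0:=\sum_a\lambda_a^0M_a\in W\cap\GL(N,\KK)$.

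It then remains to set $\widetilde{\bDelta}(\eps):=\bDelta_0^{-1}\bDelta(\eps)\in\GL(N,\KK(\eps))$, giving $\bDelta(\eps)=\bDelta_0\widetilde{\bDelta}(\eps)$, and to verify that $\widetilde{\bDelta}(\eps)$ commutes with $\bAc(\bx)^T$. Because $\bDelta_0\in W$ it satisfies eq.~\eqref{eq:plan_intertwine}, i.e.\ $\bDelta_0^{-1}\bA(\bx)\bDelta_0=-\bAc(\bx)^T$, equivalently $\bDelta_0^{-1}\bA(\bx)=-\bAc(\bx)^T\bDelta_0^{-1}$; combining this with eq.~\eqref{eq:plan_intertwine} in the form $\bDelta(\eps)\bAc(\bx)^T=-\bA(\bx)\bDelta(\eps)$ yields
\beq
\widetilde{\bDelta}(\eps)\,\bAc(\bx)^T = -\bDelta_0^{-1}\bA(\bx)\bDelta(\eps) = \bAc(\bx)^T\bDelta_0^{-1}\bDelta(\eps) = \bAc(\bx)^T\,\widetilde{\bDelta}(\eps)\,,
\eeq
which is the claim. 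Everything here is routine except for one point, which I expect to carry the real content: the passage from "$W\otimes_\KK\KK(\eps)$ contains an invertible matrix" (which we get for free from $\bDelta(\eps)$ itself) to "$W$ contains an invertible matrix over $\KK$". This is exactly where the infinitude of $\KK$ is used, and it is also what guarantees that the distinguished constant factor $\bDelta_0$ — and hence the symmetry of the differential equation it encodes — can be taken over the number field $\KK$, with no square roots or field extensions forced in, in contrast to the basis choice of ref.~\cite{Pogel:2024sdi}.
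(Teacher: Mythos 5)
Your proof is correct. It proves the same statement by the same underlying strategy — exploit the intertwining identity $\bA(\bx)\bDelta(\eps)=-\bDelta(\eps)\bAc(\bx)^T$ obtained from $\rd\bDelta=0$ and then find an invertible representative with entries in $\KK$ — but you realise the last step differently from the paper. The paper simply evaluates $\bDelta(\eps)$ at a rational point $\eps_0$ near $\eps=0$ where $\det\bDelta(\eps_0)$ is finite and nonzero and sets $\bDelta_0:=\bDelta(\eps_0)$, noting that $\bDelta(\eps)\in\KK(\eps)^{N\times N}$ forces $\bDelta_0\in\KK^{N\times N}$. You instead work directly with the $\KK$-linear solution space $W=\{M:\bA_iM+M\bAc_i^T=0\}$, observe that $\bDelta(\eps)\in W\otimes_\KK\KK(\eps)$ by flat base change, and pick $\bDelta_0\in W\cap\GL(N,\KK)$ via Zariski density of the non-vanishing locus of $\det(\sum_a\lambda_aM_a)$. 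The two constructions are equivalent (the paper's $\bDelta(\eps_0)$ is one such Zariski-generic point), but yours has two small advantages worth recording: it sidesteps the bookkeeping around possible poles of the entries of $\bDelta(\eps)$ at the chosen $\eps_0$, which the paper's argument glosses over when it argues only via zeros and poles of $\det\bDelta$; and it exhibits the full space $W$ of candidate intertwiners rather than a single point on the curve $\eps\mapsto\bDelta(\eps)$, which connects more transparently to the Schur-type uniqueness used in the next section. The verification that $\widetilde{\bDelta}(\eps)=\bDelta_0^{-1}\bDelta(\eps)$ commutes with $\bAc(\bx)^T$ is identical in both.
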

%Another way of stating this theorem is to say that $\bDelta(\eps)$ must lie in the group $G_{\eps}(\bA)$ defined at the end of section~\ref{sec:uniqueness}.
\begin{proof}
Since $\rd\bDelta=0$, eq.~\eqref{eq:dDelta_dual} implies:
\beq\label{eq:eqeqmatrix}
\bA(\bx)\bDelta(\eps) = -\bDelta(\eps)\bAc(\bx)^T\,.
\eeq
We write $\bA(\bx)$ and $\bAc(\bx)$ in the basis of $\mathbb{A}$ and we expand $\bDelta(\eps)$ in $\eps$,
\beq
\bDelta(\eps) = \eps^m\,\sum_{k=0}^{\infty}\bDelta_k\eps^k\,,
\eeq
for some integer $m$.
Then eq.~\eqref{eq:eqeqmatrix} is equivalent to the set of equations
\beq\label{eq:eqeq2}
\bA_i\bDelta_k =-\bDelta_k\bAc_i^T\,,\qquad 1\le i\le p\,, k\ge 0\,.
\eeq
The Laurent coefficients $\bDelta_k$ do not need to have full rank. Since $\bDelta(\eps)$ has full rank for generic $\eps$, there is some $\eps_0\in\mathbb{C}$ such that $\bDelta_0:=\bDelta(\eps_0)$ has full rank. We may choose $\eps_0\in\mathbb{Q}$. Indeed, the values of $\eps$ for which $\bDelta(\eps)$ does not have full rank are those for which $\det\bDelta(\eps)$ is either zero or infinite. $\det\bDelta(\eps)$ is a rational function of $\eps$, so all zeroes and poles are isolated. This implies that there is a neighborhood $U$ of $\eps=0$ such that $\bDelta(\eps)$ has full rank for all $\eps\in U\setminus\{0\}$. The set $U\setminus\{0\}$ contains an infinite number of rational numbers, and so we can pick $\eps_0$ to be one of those. Since $\bDelta(\eps) \in \KK(\eps)^{N\times N}$, this implies that all entries of $\bDelta_0$ lie in $\KK$. 

Letting $\eps=\eps_0$, we see that eq.~\eqref{eq:eqeqmatrix} takes the form
\beq\label{eq:dual_intertwine}
\bA(\bx) =-\bDelta_0\bAc(\bx)^T\bDelta_0^{-1}\,.
\eeq
Inserting this relation back into eq.~\eqref{eq:eqeqmatrix} we find, with $\widetilde{\bDelta}(\eps):= \bDelta_0^{-1}\bDelta(\eps)$,
\beq
\widetilde{\bDelta}(\eps)\bAc(\bx)^T=\bAc(\bx)^T\widetilde{\bDelta}(\eps)\,.
\eeq
Hence, we see that $\widetilde{\bDelta}(\eps)$ and $\bAc(\bx)^T$ commute.
\end{proof}

\subsection{Interpretation in terms of Lie algebras}
\label{sec:Lie}
In this section we discuss how we can associate a Lie algebra to every system in canonical form or in C-form, and we give an interpretation of the matrix $\bDelta_0$ in that context.

Consider  the Lie algebra $\mathfrak{g}_{\bbA}$ obtained as the quotient of the free Lie algebra over $\KK$ with generators $t_i$, $1\le i\le p$ and the Lie ideal generated by (cf. eq.~\eqref{eq:t_commutators})
\beq
\sum_{i,j=1}^pa_{ijk}\,[t_i,t_j]\,,\qquad 1\le k\le q\,.
\eeq
It is easy to see that the quadratic algebra $A_{\mathbb{A}}$ defined in section~\ref{sec:line_dep} is the universal enveloping algebra of $\mathfrak{g}_{\bbA}$. Note that if there is a single variable ($r=1$), the flatness condition is trivially satisfied, and the Lie algebra $\mathfrak{g}_{\bbA}$ is a free Lie algebra.

The matrices $\bA_i$ provide an $N$-dimensional representation of $\mathfrak{g}_{\bbA}$,
\beq
\rho: \mathfrak{g}_{\bbA}\to \End(\mathbb{C}^N)\,, \quad t_i \mapsto \bA_i\,.
\eeq
We see that we can associate to $\bbA$ a Lie algebra $\mathfrak{g}_\bbA$, and every system in C-form provides a (in general reducible) representation. Said differently, all possible systems in C-form can be characterised by the pair $(\mathfrak{g}_{\bbA}, \rho)$. In particular, this implies that we can associate a Lie algebra to every family of Feynman integrals for which we can find a set of differential equations in C-form. Note that many Feynman integrals share the same Lie algebra, but only differ by the choice of the representation $\rho$. For example, many Feynman integrals can be expressed in terms of harmonic polylogarithms~\cite{Remiddi:1999ew}. In that case $\cA$  is the ring of rational functions with poles at most at $x\in\{0,\pm1,\infty\}$,
\beq
\cA = \mathbb{Q}\left[x,\tfrac{1}{x},\tfrac{1}{1-x},\tfrac{1}{1+x}\right]\,,
\eeq
and $\bbA$ is generated by logarithmic forms,
\beq
\bbA = \left\langle\dlog x,\dlog(1-x),\dlog(1+x)\right\rangle_{\mathbb{Q}}\,.
\eeq
In this case the Lie algebra $\mathfrak{g}_{\bbA}$ is the free Lie algebra in 3 generators. All Feynman integrals that evaluate to harmonic polylogarithms share this same Lie algebra, but they differ by the representation $\rho$ which defines the matrices that appear in the differential equation.

%The group $G_{\eps}(\bA)$ of matrices that preserve the canonical form also has a nice interpretation in terms of the Lie algebra $\mathfrak{g}_{\bbA}$. For a ring $R$, let us denote by $\Aut(\rho,R)$ the automorphisms of the representation $\rho$,
%\beq
%\Aut(\rho,R) = \big\{\bM\in R^{N\times N} : [\rho,\bM]=0\big\}\,.
%\eeq
%It is then easy to see that 
%\beq
%G_{\eps}(\bA) = \GL(N,\KK)\times \Aut(\rho,\KK(\eps))\,.
%\eeq
%In other words, the group of matrices that preserve the canonical form is generated by constant matrices and $\eps$-dependent automorphisms of the representation $\rho$. 

Since a family of Feynman integrals and their duals share the same algebra $\cA$ (cf. eq.~\eqref{eq:Omega_del_relation}, which relates the differential equation matrices $\bOmega(\bx,\eps)$ and $\bOmegac(\bx,\eps)$), they correspond to different representations of the same Lie algebra $\mathfrak{g}_{\bbA}$,
\beq
\rho(t_i) = \bA_i \textrm{~~~and~~~} \check{\rho}(t_i) = \bs{\check{A}}_i\,.
\eeq
We now discuss the connection between these two representations. First, we recall that for every representation of a Lie algebra, there is a dual representation $\rho^* = -\rho^T$. Equation~\eqref{eq:dual_intertwine} implies that the representation $\check{\rho}$ is equivalent to the dual representation $\rho^*$:
\beq\label{eq:dual_matrix}
\check{\rho}(t_i) = \bAc_i = -\bDelta_0^T\bA_i^T\big(\bDelta_0^{-1}\big)^T = \bDelta_0^T\rho^*(t_i)\big(\bDelta_0^T\big)^{-1}\,,
\eeq
where we recall that two representations $\rho_1$ and $\rho_2$ are equivalent if there is an invertible matrix $\bM$ such that $\rho_2=\bM\rho_1\bM^{-1}$.

Having established how we can associate a Lie algebra representation to a system in C-form, one can study what one may learn from representation theory about such systems. An important class of Lie algebra representations are \emph{irreducible} representations. We recall that a Lie algebra representation $\rho$ is called \emph{reducible} is there is a matrix $\bM\in\GL(N,\mathbb{C})$ such that $\bM\rho\bM^{-1}$ is block upper-triangular, and an irreducible representation is one that is not reducible. We call a system irreducible if the associated representation of $\mathfrak{g}_{\bbA}$ is irreducible. An important tool to characterise irreducible representations is Schur's lemma, and we can indeed formulate a variant of Schur's lemma for irreducible systems in C-form.
\begin{lemma}\label{lemma:Schur}
Let $\bA(\bx)$ be the matrix describing an irreducible system in C-form, and $\bM(\eps) \in \KK({\eps})^{N\times N}$. Then $\bA(\bx)$ and $\bM(\eps)$ commute if and only if $\bM(\eps)$ is a multiple of the identity, i.e., there is a rational function $f\in \KK({\eps})$ such that $\bM(\eps) = f(\eps)\mathds{1}$.
\end{lemma}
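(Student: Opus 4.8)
The plan is to prove the statement as a direct analogue of Schur's lemma, exploiting the fact that the entries of $\bM(\eps)$ live in $\KK(\eps)$ while the entries of $\bA(\bx)$ are one-forms built from the letters of $\bbA$. First I would expand $\bA(\bx) = \sum_{i=1}^p\bA_i\omega_i$ in the basis of $\bbA$. Since the $\omega_i$ are linearly independent over $\KK$ (they form a basis of $\bbA$) and the entries of $\bM(\eps)$ do not involve $\bx$ at all, the equation $\bA(\bx)\bM(\eps) = \bM(\eps)\bA(\bx)$ is equivalent to the collection of matrix equations $\bA_i\bM(\eps) = \bM(\eps)\bA_i$ for all $1\le i\le p$. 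Thus $\bM(\eps)$ must commute with every generator $\rho(t_i) = \bA_i$ of the representation, hence with all of $\rho(\mathfrak{g}_\bbA)$, i.e. $\bM(\eps)$ is an intertwiner of the representation $\rho$ with itself.

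Next I would invoke the standard Schur's lemma argument, taking care of the fact that the scalars here are in $\KK(\eps)$ rather than in an algebraically closed field. Fix a value $\eps_0$; the matrix $\bM(\eps_0)$ has entries in the algebraically closed field $\mathbb{C}$ (or one can work over the algebraic closure $\overline{\KK(\eps)}$ directly). Since $\rho$ is irreducible over $\mathbb{C}$ and $\bM(\eps_0)$ commutes with $\rho(\mathfrak{g}_\bbA)$, classical Schur's lemma gives that $\bM(\eps_0)$ is a scalar multiple of the identity: $\bM(\eps_0) = c(\eps_0)\mathds{1}$ for some $c(\eps_0)\in\mathbb{C}$. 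Concretely, pick any eigenvalue $c$ of $\bM(\eps_0)$ (which exists since $\mathbb{C}$ is algebraically closed); then $\bM(\eps_0) - c\,\mathds{1}$ also commutes with $\rho$, so its kernel is a nonzero $\rho$-invariant subspace, which by irreducibility must be all of $\mathbb{C}^N$, giving $\bM(\eps_0) = c\,\mathds{1}$.

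Finally I would promote this pointwise statement to a statement of rational functions. Define $f(\eps) := \bM(\eps)_{11}$, which lies in $\KK(\eps)$. For every $\eps_0$ in the (cofinite) set where the above applies, $\bM(\eps_0) = \bM(\eps_0)_{11}\,\mathds{1} = f(\eps_0)\mathds{1}$, so every entry of the matrix $\bM(\eps) - f(\eps)\mathds{1}$, which is a rational function of $\eps$ with coefficients in $\KK$, vanishes at infinitely many points and hence is identically zero. Therefore $\bM(\eps) = f(\eps)\mathds{1}$ with $f\in\KK(\eps)$, as claimed. The converse direction is trivial, since any scalar matrix commutes with everything.

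The main obstacle, though a mild one, is the scalar-field bookkeeping: Schur's lemma in its sharpest form requires an algebraically closed field, whereas $\bM(\eps)$ naturally lives over $\KK(\eps)$, which is not algebraically closed. The clean way around this is precisely the specialisation-at-$\eps_0$ argument above, reducing to matrices over $\mathbb{C}$ and then using that a rational function vanishing at infinitely many points vanishes identically; one just has to be slightly careful that irreducibility of $\rho$ is a statement over $\mathbb{C}$ (as defined in the text, with $\bM\in\GL(N,\mathbb{C})$), so that the eigenvalue and invariant-subspace argument goes through without needing to extend scalars further.
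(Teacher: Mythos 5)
Your proof is correct and follows essentially the same two-step strategy as the paper: first reduce commutation with $\bA(\bx)$ to the set of commutation relations $[\bA_i,\bM(\eps)]=0$ using linear independence of the $\omega_i$, then invoke classical Schur's lemma over $\mathbb{C}$, where the representation is irreducible. The only difference is how you descend from $\KK(\eps)$ to matrices over an algebraically closed field: the paper Laurent-expands $\bM(\eps)=\sum_{k\ge k_0}\bM_k\eps^k$ and applies Schur to each coefficient matrix $\bM_k\in\KK^{N\times N}$, whereas you specialise $\eps$ to a point $\eps_0\in\mathbb{C}$ outside the poles of $\bM$, apply Schur there, and finish with the standard fact that a rational function with infinitely many zeros vanishes identically. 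Both variants are rigorous and of comparable length; they are interchangeable.
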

\begin{proof}
We decompose $\bA(\bx)$ into the basis $\omega_i$, and we expand $\bM(\eps)$ in $\eps$,
\beq
\bM(\eps) = \sum_{k=k_0}^{\infty}\bM_k\eps^k\,.
\eeq
Then $\bA(\bx)$ and $\bM(\eps)$ commute if and only if
\beq
[\bA_i,\bM_k] = 0\,,\qquad 1\le i\le p\,, \quad k\ge k_0\,.
\eeq
%
%
% and we expand $\widetilde{\bDelta}(\eps)$ into a Taylor series in $\eps$,
%\beq
%\widetilde{\bDelta}(\eps) = \mathds{1}+\sum_{k=1}^{\infty}\widetilde{\bDelta}_k\eps^k\,.
%\eeq
%This lead to the commutation relations
%\beq
%[\bA_i,\widetilde{\bDelta}_k] = 0\,,\qquad 1\le i\le p\textrm{~~and~~} k>0\,.
%\eeq
Since the system is assumed irreducible, the representation $\rho(t_i)=\bA_i$ from section~\ref{sec:Lie} is irreducible over $\mathbb{C}$. By Schur's lemma, every operator that commutes with a representation of a Lie algebra that is irreducible over an algebraically closed field must be proportional to the identity, and so $\bM_k = \lambda_k\mathds{1}$ for some $\lambda_k\in\KK$. Then the claim follows.
\end{proof}
Let us discuss an important implication of Schur's lemma for irreducible systems. 
%First, from Proposition~\ref{eq:uniqueness}, we see that the only matrices that preserve the C-form of an irreducible system are of the form $\bM(\eps) = f(\eps)\bM_0$, with $f\in\KK(\eps)$ is a rational function and $\bM_0\in\GL(N,\KK)$ is a constant matrix with full rank. 
%This also implies that the group $G_{\eps}(\bA)$ has a very simply structure for irreducible systems,
%\beq
%G_{\eps}(\bA) = \GL(N,\KK)\times \KK(\eps)\,.
%\eeq 
%\sfsnote{Is this mathematically correct? The decomposition is not unique, right?}
%Another application of Schur's lemma is the form of matrix $\bDelta(\eps)$ for irreducible systems. 
It is easy to see that, if $\rho$ is irreducible, then so is its dual $\rho^*$. Theorem~\ref{thm:duality} combined with Schur's lemma then implies that $\bDelta(\eps) = f(\eps)\bDelta_0$, for some rational function $f$. Note that the representations obtained from Feynman integrals are typically not irreducible, because $\bA(\bx)$ has a block-diagonal structure. The blocks on the diagonal, which describe maximal cuts, however, may be irreducible, and then Schur's lemma predicts that the intersection matrix for maximal cuts in a canonical basis takes a particularly simple form.

Another interesting class of representations are self-dual representations, i.e., representations for which $\rho$ and $\rho^*$ are equivalent. In the next section we will argue that self-dual representations naturally arise from maximal cuts in a canonical basis, and we study the consequences of this.

%%%%%%%%%%%%%%%%%%%%%%%%%%%%%%%%%%%%%%%%%%%%%%%%
%%%%%%%%%%%%%%%%%%%%%%%%%%%%%%%%%%%%%%%%%%%%%%%%

\section{Maximal cuts and self-dual systems}
\label{sec:main:thm}

\subsection{Self-dual systems}

We now focus on maximal cuts, we assume that we work in a canonical basis for both the Feynman integrals and their duals, and that the system is at the same time in C-form. As noted before, all currently known examples in canonical form are also in C-form. At this point we cannot exclude that there might be examples of canonical forms that are not C-forms or vice versa, and we therefore  specify both conditions separately.
%\fpnote{Didn't we say C-form contains canonical for all know examples? why both here?}\claudecomment{It contains all known examples. But for new ones, we don't know in general... for example, for hyperelliptic, we would not know. So, we have to say that we assume it for all new cases. We can maybe clarify this here.}. 
We use the notation from section~\ref{sec:maximal-cuts}. Equation~\eqref{dualomegaepsilon} implies the relation $\eps\bAc^{\textrm{m.c.}}(\bx) = -\eps\bA^{\textrm{m.c.}}(\bx)$, or equivalently,
\beq
\bAc^{\textrm{m.c.}}(\bx) = -\bA^{\textrm{m.c.}}(\bx)\,.
\eeq
Inserting this relation into eq.~\eqref{eq:dual_matrix}, we see that the representation of $\mathfrak{g}_{\bbA}$ is self-dual, and we have
\beq\label{eq:sd_condition_1}
\bA^{\textrm{m.c.}}(\bx) = \bDelta_0^T\bA^{\textrm{m.c.}}(\bx)^T\big(\bDelta_0^{-1}\big)^T\,.
\eeq
If we assume in addition that the representation associated to $\bA^{\textrm{m.c.}}(\bx) $ is irreducible,\footnote{If this condition fails, then the statement holds for all irreducible blocks individually.} then we have the following result:
\begin{theorem}\label{thm:main_sd}
If the representation of $\mathfrak{g}_{\bbA}$ is both self-dual and irreducible, then we have $\bDelta(\eps) = f(\eps)\,\bDelta_0$, where $f\in\KK(\eps)$ is a rational function of $\eps$ and $\bDelta_0\in \GL(N,\KK)$ is either symmetric or antisymmetric. 
\end{theorem}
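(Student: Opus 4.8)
The plan is to split the theorem into its two assertions and prove each with the machinery already in place, i.e.\ Theorem~\ref{thm:duality} and the Schur-type Lemma~\ref{lemma:Schur}. Recall that Theorem~\ref{thm:duality} writes $\bDelta(\eps) = \bDelta_0\,\widetilde{\bDelta}(\eps)$ with $\bDelta_0\in\GL(N,\KK)$ constant and $\widetilde{\bDelta}(\eps)\in\GL(N,\KK(\eps))$ commuting with $\bAc^{\textrm{m.c.}}(\bx)^T$. For maximal cuts $\bAc^{\textrm{m.c.}}(\bx)=-\bA^{\textrm{m.c.}}(\bx)$, so $\widetilde{\bDelta}(\eps)$ commutes with $\bA^{\textrm{m.c.}}(\bx)^T$; transposing, $\widetilde{\bDelta}(\eps)^T$ commutes with $\bA^{\textrm{m.c.}}(\bx)$. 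Since by hypothesis the system for $\bA^{\textrm{m.c.}}$ is irreducible and in C-form, Lemma~\ref{lemma:Schur} yields $\widetilde{\bDelta}(\eps)^T=f(\eps)\mathds{1}$ for some $f\in\KK(\eps)$, hence $\widetilde{\bDelta}(\eps)=f(\eps)\mathds{1}$ and $\bDelta(\eps)=f(\eps)\,\bDelta_0$. This proves the first assertion.

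For the second assertion I would reason as follows. By eq.~\eqref{eq:sd_condition_1}, $\bDelta_0^T$ conjugates $\bA^{\textrm{m.c.}}(\bx)^T$ into $\bA^{\textrm{m.c.}}(\bx)$; transposing eq.~\eqref{eq:sd_condition_1} gives $\bA^{\textrm{m.c.}}(\bx)^T=\bDelta_0^{-1}\bA^{\textrm{m.c.}}(\bx)\bDelta_0$, so $\bDelta_0$ conjugates $\bA^{\textrm{m.c.}}(\bx)^T$ into $\bA^{\textrm{m.c.}}(\bx)$ as well. Composing the two, $\bDelta_0^T\bDelta_0^{-1}$ commutes with $\bA^{\textrm{m.c.}}(\bx)$; since $\bDelta_0^T\bDelta_0^{-1}\in\GL(N,\KK)\subseteq\GL(N,\KK(\eps))$ and the system is irreducible, Lemma~\ref{lemma:Schur} forces $\bDelta_0^T\bDelta_0^{-1}=c\,\mathds{1}$ with $c\in\KK$, i.e.\ $\bDelta_0^T=c\,\bDelta_0$. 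Transposing once more gives $\bDelta_0=c\,\bDelta_0^T=c^2\,\bDelta_0$, and since $\bDelta_0$ is invertible, $c^2=1$, so $c=\pm1$. Thus $\bDelta_0$ is symmetric if $c=1$ and antisymmetric if $c=-1$, as claimed.

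I do not expect a real obstacle here; the statement is the matrix form of the classical fact that an irreducible representation over an algebraically closed field which is isomorphic to its dual carries an invariant bilinear form, unique up to scalar, that is necessarily symmetric or antisymmetric — here the invariant form is encoded by $\bDelta_0$ and the representation is the self-dual irreducible one associated to $\bA^{\textrm{m.c.}}$ in section~\ref{sec:Lie}. The only points requiring a little care are: (i) Schur's lemma is invoked over $\mathbb{C}$, which is legitimate since ``irreducible'' here means irreducible over $\mathbb{C}$, cf.\ the proof of Lemma~\ref{lemma:Schur}; (ii) one must first pass through a transpose to put the commutation relations for $\widetilde{\bDelta}(\eps)$ and for $\bDelta_0^T\bDelta_0^{-1}$ into the precise form Lemma~\ref{lemma:Schur} expects (commutation with $\bA(\bx)$, not with $\bA(\bx)^T$); and (iii) keeping track that the scalar produced by Schur's lemma lies in $\KK$ — automatic because $\bDelta_0$ has entries in $\KK$ — so that $c^2=1$ genuinely pins $c$ to $\pm1$. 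Granted Theorem~\ref{thm:duality} and Lemma~\ref{lemma:Schur}, no additional ideas enter.
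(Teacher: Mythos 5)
Your proof is correct and follows essentially the same route as the paper: both halves rest on Schur's lemma (via Lemma~\ref{lemma:Schur}) applied first to $\widetilde{\bDelta}(\eps)$ from Theorem~\ref{thm:duality} to get $\bDelta(\eps)=f(\eps)\,\bDelta_0$, and then to the intertwining relation~\eqref{eq:sd_condition_1} to pin down $\bDelta_0^T=c\,\bDelta_0$ with $c=\pm1$. The only cosmetic difference is that the paper argues via ``both $\bDelta_0$ and $\bDelta_0^T$ solve a linear equation whose solution is unique up to scalar,'' while you explicitly compose the two conjugations to show $\bDelta_0^T\bDelta_0^{-1}$ commutes with $\bA^{\textrm{m.c.}}(\bx)$ — these are the same argument.
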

\begin{proof}
From Schur's lemma, we know already that $\bDelta(\eps) = f(\eps)\,\bDelta_0$. It remains to show that $\bDelta_0$ has the required symmetry. This can be seen as follows: For a self-dual system, eq.~\eqref{eq:sd_condition_1} must hold. We see eq.~\eqref{eq:sd_condition_1} as an equation for $\bDelta_0$. This equation is clearly linear in $\bDelta_0$, and if $\bDelta_0$ is a solution, so is $\bDelta_0^T$. Since it follows from Schur's lemma that the solution is unique up to multiplication by some constant, we must have
\beq
\bDelta_0^T = k\, \bDelta_0\,,
\eeq
for some $k\in\mathbb{K}$. By taking the transpose, we can determine the value of $k$ to be $\pm1$:
\beq
\bDelta_0 = k\, \bDelta_0^T =k^2\,\bDelta_0 \,.
\eeq
\end{proof}

\begin{examplenew}{1} \label{ex2f12}
Let us illustrate Theorem~\ref{thm:main_sd} for the hypergeometric function introduced in example \ref{ex2f1}.
The relevant twist is given by
\begin{equation}
    \Phi_T=z^{-\tfrac{1}{2}+a \varepsilon}(1-z)^{-\tfrac{1}{2}+b \varepsilon}(1-x z)^{-\tfrac{1}{2}+c \varepsilon} \,.
\end{equation}
A basis for the cohomology group $H^1(X_T,\nabla)$, with $X_T=\mathbb{C}-\{0,1,\tfrac{1}{x},\infty\}$,
is 
\beq
\varphi_{T,1}=\rd z \textrm{~~~and~~~} \varphi_{T,2}=\rd z\, z\,.
\eeq
Choosing the dual basis such that $\check{\bP}_{T}(x,\eps)=\bP_{T}(x,-\varepsilon)$ leads to the cohomology intersection matrix
\begin{align}
\bC_{T}(x,\eps)=\left(
\begin{array}{cc}
 0 & \frac{2}{x (2 \varepsilon  (a+b+c)-1)} \vspace{0.2cm}\\
 \frac{2}{2 x \varepsilon  (a+b+c)+x} & \frac{4 \varepsilon  (x (a+c)+a+b)}{x^2 \left(4 \varepsilon ^2 (a+b+c)^2-1\right)} \\
\end{array}
\right)\,.
\end{align}
If we change basis to the canonical basis using the matrix $\bS_T(x,\varepsilon)$ in eq.~\eqref{eq:ST_def}, we obtain 
\beq\label{delta2F1}
\bDelta_T(x,\eps) = \bS_T(x,\eps)^{-1}\bC_T(x,\eps)\left(\bS_T(x,-\eps)^{-1}\right)^T = f_T(\eps)\bDelta_{0,T}\,,
\eeq
with
\beq
f_T(\eps) = \frac{2}{\varepsilon} \textrm{~~~and~~~} \bDelta_{0,T}=\left(
\begin{smallmatrix}
 0 & \phantom{-}1 \\
 1 & -2 \\
\end{smallmatrix}
\right)\,.
\eeq
\end{examplenew}
\begin{examplenew}{2}
\label{exSunrise2}
    Let us also discuss the example of the unequal-mass sunrise integral from example \ref{exSunrise}. The cohomology intersection matrix $\bC_\mathrm{S}(\bx,\eps)$ for the maximal cuts of the unequal-mass sunrise integral was computed in ref.~\cite{Duhr:2024rxe}. If we perform the transformation to the canonical basis  described in ref.~\cite{Bogner:2019lfa}, we find    
    \beq\label{deltaSunrise}
\bDelta_\mathrm{S}(\bx,\eps) = \bS_\mathrm{S}(\bx,\eps)^{-1}\bC_\mathrm{S}(\bx,\eps)\left(\bS_\mathrm{S}(\bx,-\eps)^{-1}\right)^T = f_\mathrm{S}(\eps)\bDelta_{0,\mathrm{S}}\,,
\eeq
with
\begin{equation}
    \label{eq:intMatrixSunrise}
       f_\mathrm{S}(\eps)=\frac{\eps^7}{1-4\eps^2} \textrm{~~and~~} \bDelta_{0,\mathrm{S}}=\begin{pmatrix}
            0 & 0 & 0 & -\frac{1}{8} \\
            0 & -3 & 0 & 0 \\
            0 & 0 & -1 & 0 \\
            -\frac{1}{8} & 0 & 0 & 0
        \end{pmatrix} \,.
    \end{equation}
    As expected, this matrix does not depend on the kinematic parameters $\bx$ anymore and can be written as a rational function $f_\mathrm{S}(\eps)$ times a constant matrix $\bDelta_{0,\mathrm{S}}$.
\end{examplenew}

%%%%%%%%%%%%%%%%%%%%%%%%%%%%%%%%%%%%%%%%%%%%%%%%%%
%%%%%%%%%%%%%%%%%%%%%%%%%%%%%%%%%%%%%%%%%%%%%%%%%%

\subsection{Symmetries of differential equations from self-duality}
\label{sec:symmetries}
Since $\bDelta_0$ must be either symmetric or antisymmetric, we can cast eq.~\eqref{eq:sd_condition_1} in the form
\beq\label{eq:sd_condition_2}
\bA^{\textrm{m.c.}}(\bx) = \bDelta_0\bA^{\textrm{m.c.}}(\bx)^T\bDelta_0^{-1}\,.
\eeq
Equation~\eqref{eq:sd_condition_2} requires that not all entries of $\bA^{\textrm{m.c.}}(\bx)$ are independent, and there must be relations among them.
We can thus interpret eq.~\eqref{eq:sd_condition_2} as self-duality and irreducibility imposing some symmetries on the matrix $\bA^{\textrm{m.c.}}(\bx)$. This implies precisely  the content of the observation of self-duality of ref.~\cite{Pogel:2024sdi}, though in that case only a very specific symmetry is allowed. %\fpnote{I just changed "is" to "implies", that could come closer to what we mean?} 
%\fpnote{I think I don't understand this sentence. How can it be precisely the content and at the same time only a specific case?}\claudecomment{I meant to say the following: the content of their conjecture is ``self-duality implies symmetry'', so in that sense we agree. But for them it means a ``specific symmetry tha tis only visible in a special basis'', while it is actually basis-independent. Feel free to rephrase!}
Indeed, we see that eq.~\eqref{eq:sd_condition_2} agrees with eq.~\eqref{eq:weinzierl_sd} for $\bDelta_0 = \bK_N$. The matrix $\bDelta_0$ in eq.~\eqref{eq:sd_condition_2} can at least a priori be more general. In particular, it may be symmetric or antisymmetric, while $\bK_N$ is always symmetric. In the following we investigate the two cases in turn. We put $\KK=\mathbb{Q}$ from here on, in order to simplify the discussion, and because this is the typical case encountered for Feynman integrals. The extension of the discussion to algebraic number fields is not difficult.

\paragraph{The symmetric case.} We start by discussing the case $\bDelta_0^T=\bDelta_0$. $\bDelta_0$ is real and symmetric for $\KK=\mathbb{Q}$, and thus diagonalisable. In other words, there is a real orthogonal matrix $\bR_1$ and a diagonal matrix $\bD = \diag(\lambda_1,\ldots,\lambda_N)$ such that
\beq
\bDelta_0 = \bR_1\bD\bR_1^T = \widetilde{\bR}_1\widetilde{\bR}_1^T\,,
\eeq
with $\widetilde{\bR}_1 := \bR_1\sqrt{\bD}$ and $\sqrt{\bD}= \diag(\sqrt{\lambda_1},\ldots,\sqrt{\lambda_N})$\,. Note that, since $\bDelta_0$ has full rank, all its eigenvalues $\lambda_i$ are non-zero, and so $\widetilde{\bR}_1$ also has full rank.

If $\bG(\bx,\eps)$ denotes the period matrix in a canonical basis, we can change basis to
\beq
\bG'(\bx,\eps) = \widetilde{\bR}_1^{-1}\bG(\bx,\eps)\,.
\eeq
Since $\widetilde{\bR}_1$ is constant, we have changed basis to another canonical basis. In this new basis the intersection matrix takes the very simple form
\beq
\bDelta'(\eps) = \widetilde{\bR}_1^{-1}\bDelta(\eps)\left(\widetilde{\bR}_1^{-1}\right)^T = f(\eps)\mathds{1}\,,
\eeq
and the matrix of the differential equation is
\beq
\widetilde{\bA}^{\textrm{m.c.}}(\bx) =  \widetilde{\bR}_1^{-1}{\bA}^{\textrm{m.c.}}(\bx) \widetilde{\bR}_1\,,
\eeq
and eq.~\eqref{eq:sd_condition_2} implies that $\widetilde{\bA}^{\textrm{m.c.}}(\bx)$ is symmetric. This statement seems to be at odds with the observation of self-duality of ref.~\cite{Pogel:2024sdi}, which states that there is a basis in which the matrix of the differential equation becomes persymmetric, cf.~eq.~\eqref{eq:weinzierl_sd}. This apparent conundrum is solved as follows. Using exactly the same argument as before, we can show that for every real $\bM$ that is symmetric and has full rank, there is constant matrix $\widetilde{\bR}_M$ such that after rotation to this basis we have 
\beq
\bDelta'_0 = \bM \textrm{~~~and~~~} \widetilde{\bA}^{\textrm{m.c.}}(\bx) = \bM\widetilde{\bA}^{\textrm{m.c.}}(\bx)^T\bM^{-1}\,.
\eeq
This shows that, whenever $\bDelta_0$ is symmetric, there is a constant matrix $\widetilde{\bR}_K$ such that in this basis
\beq\label{eq:weinzierl_sd_2}
\bDelta'_0 = \bK_N \textrm{~~~and~~~} \widetilde{\bA}^{\textrm{m.c.}}(\bx) = \bK_N\widetilde{\bA}^{\textrm{m.c.}}(\bx)^T\bK_N^{-1}\,.
\eeq
This is precisely the statement of the observation of self-duality of ref.~\cite{Pogel:2024sdi}, cf.~eq.~\eqref{eq:weinzierl_sd}. In fact, it is easy to see that the observation of self-duality of ref.~\cite{Pogel:2024sdi} is entirely equivalent to the following statement:
\begin{quote}
\emph{For maximal cuts the matrix $\bDelta_0$ is always symmetric (and never antisymmetric).}
\end{quote}
This equivalent formulation of the observation of ref.~\cite{Pogel:2024sdi} has the advantage that it does not require a specific basis choice, but it holds regardless of the choice of canonical basis. We will come back to this point in section~\ref{sec:galois}. We know that the general formalism predicts that $\bDelta_0$ could be either symmetric or antisymmetric. To see if the observation of self-duality of ref.~\cite{Pogel:2024sdi} could be true generally, we need to investigate the antisymmetric case.

\paragraph{The antisymmetric case.} We now turn to the case $\bDelta_0^T=-\bDelta_0$. We start by observing that this case can only arise for an even number $N$ of master integrals in the top sector. Indeed, we must have
\beq
\det\bDelta_0 = \det\bDelta_0^T = (-1)^N\det\bDelta_0\,,
\eeq
and so $\det\bDelta_0=0$ for $N$ odd. Since we know that $\bDelta_0$ must have full rank, this limits this case to $N$ even. In particular, this implies that the observation of self-duality of ref.~\cite{Pogel:2024sdi} is correct generally for all cases with an odd number of master integrals in the top sector.

For $N$ even, there is always a real orthogonal matrix $\bs{Q}_1$ such that
\beq
\bDelta_{0} = \bs{Q}_1\bJ_N\bs{Q}_1^{-1}\,,
\eeq
where $\bJ_N$ is the standard symplectic matrix
\beq
\bJ_N = -\bJ_N^T = -\bJ_N^{-1} = \left(\begin{smallmatrix} \bs{0} & \mathds{1}\\ -\mathds{1} & \bs{0}\end{smallmatrix}\right)\,.
\eeq
It follows that there is a canonical basis $\bG'(\bx,\eps) = \widetilde{\bs{Q}}_1^{-1}\bG(\bx,\eps)$ with 
\beq\label{eq:symp_sym}
\bDelta'_0 = \bJ_N \textrm{~~~and~~~} \widetilde{\bA}^{\textrm{m.c.}}(\bx) = \bJ_N\widetilde{\bA}^{\textrm{m.c.}}(\bx)^T\bJ_N^{-1}\,.
\eeq
Equation~\eqref{eq:symp_sym} defines a new potential symmetry for the matrix of differential equations, which is not captured by the observation of self-duality of ref.~\cite{Pogel:2024sdi}. Note that for $N=2$, eq.~\eqref{eq:symp_sym} implies $\widetilde{\bA}^{\textrm{m.c.}}(\bx) = \diag(\omega,\omega)$, which is impossible if the system is irreducible.
\begin{examplenew}{1}
We can easily observe the symmetry in eq.~\eqref{eq:sd_condition_2} for the hypergeometric function from example~\ref{ex2f12}. For the matrices defined in eq.~\eqref{A2f1} and eq.~\eqref{delta2F1} we find
\begin{align}\label{eq:2F1_symmetry}
\bA_{T}(\tau)=\bDelta_{T,0}\bA^T_{T}(\tau)\bDelta_{T,0}^{-1}\,.
\end{align}
Equation~\eqref{eq:2F1_symmetry} can be interpreted as a constraint on the entries of $\bA_{T}(\tau)$. The most general matrix $\bA_{T}(\tau)$ that satisfies eq.~\eqref{eq:2F1_symmetry} has the form
\beq
\bA_{T}(\tau) = \begin{pmatrix} a(\tau) & b(\tau)\\c(\tau) & a(\tau)-2b(\tau)\end{pmatrix}\,.
\eeq
In particular, we see that the solution space is three-dimensional, and it is easy to check that this three-dimensional solution space is spanend by the three matrices $\bA_{T,i}$ in eq.~\eqref{singlea2f1}.
%
%This symmetry yields an upper bound on the number of distinct differential forms appearing in $\bA_{T}(\tau)$ by the dimension of its solution space.
%In this example, the solution space for a general $2\times 2$-matrix $A_{T,G}$ is spanned by the three matrices $A_{T,G}^{(1)},A_{T,G}^{(2)},A_{T,G}^{(3)}$ defined in eq.~\eqref{singlea2f1}, so that the dimension of the solution space exactly aligns with the number of independent differential forms.
\end{examplenew}
\begin{examplenew}{2}
\label{exSunrise3}
%    Let us again turn to the example of the unequal-mass sunrise integral already studied in \ref{exSunrise} and \ref{exSunrise2}. In (\ref{eq:intMatrixSunrise}) we verified that, in the $\varepsilon$-factorized basis, the intersection matrix takes the form of a rational function of $\varepsilon$
%    \begin{equation}
%        f_\mathrm{S}(\eps)=\frac{\eps^7}{1-4\eps^2}
%    \end{equation}
%    times the constant matrix
%    \begin{equation} \label{delta0sun}
%        \bDelta_{0,\mathrm{S}}=\begin{pmatrix}
%            0 & 0 & 0 & -\frac{1}{8} \\
%            0 & -3 & 0 & 0 \\
%            0 & 0 & -1 & 0 \\
%            -\frac{1}{8} & 0 & 0 & 0
%        \end{pmatrix} \,.
%    \end{equation}
%    We can see that indeed $f_\mathrm{S}(\eps)$ is an odd rational function and $\bDelta_{0,\mathrm{S}}$ is a symmetric matrix. This was to be expected since we focussed on the irreducible top-sector $4\times 4$ block of the differential equation system corresponding to the maximal cut. Since the matrix $\bDelta_{0,\mathrm{S}}$ is symmetric we could further rotate it into $\bK_4$ leading to the persymmetry of the maximal cut block of the unequal-mass sunrise in this basis, as noticed in \cite{Pogel:2024sdi}.
%
Using eq.~\eqref{eq:intMatrixSunrise}, we see that the differential equation matrix for the maximal cuts satisfies the condition:
    \begin{equation}
    \label{eq:symmetryEqSunrise}
        \bA_\mathrm{S}^{\text{m.c.}}(\bx)=\bDelta_{0,\mathrm{S}}\bA_\mathrm{S}^{\text{m.c.}}(\bx)^T\bDelta_{0,\mathrm{S}}^{-1} \, .
    \end{equation}
We can again interpret eq.~\eqref{eq:symmetryEqSunrise} as imposing $\mathbb{Q}$-linear relations between the entries of the differential equation matrix $\bA_\mathrm{S}^{\text{m.c.}}(\bx)$. Explicitly we find the six relations
    \beq\bsp
        \left(\bA_\mathrm{S}^{\text{m.c.}}(\bx)\right)_{12}&=\frac{1}{24}\left(\bA_\mathrm{S}^{\text{m.c.}}(\bx)\right)_{24}\,, \\
        \left(\bA_\mathrm{S}^{\text{m.c.}}(\bx)\right)_{23}&=3\left(\bA_\mathrm{S}^{\text{m.c.}}(\bx)\right)_{32}\,, \\
        \left(\bA_\mathrm{S}^{\text{m.c.}}(\bx)\right)_{13}&=\frac{1}{8}\left(\bA_\mathrm{S}^{\text{m.c.}}(\bx)\right)_{34}\,, \\
        \left(\bA_\mathrm{S}^{\text{m.c.}}(\bx)\right)_{31}&=8\left(\bA_\mathrm{S}^{\text{m.c.}}(\bx)\right)_{43}\,, \\
        \left(\bA_\mathrm{S}^{\text{m.c.}}(\bx)\right)_{11}&=\left(\bA_\mathrm{S}^{\text{m.c.}}(\bx)\right)_{44}\,, \\
        \left(\bA_\mathrm{S}^{\text{m.c.}}(\bx)\right)_{21}&=24\left(\bA_\mathrm{S}^{\text{m.c.}}(\bx)\right)_{42} \,,
    \esp\eeq
    These can easily be checked to hold for the differential equation matrix given in ref.~\cite{Bogner:2019lfa}. These relations allow 10 linearly independent solutions, implying in particular that the dimension of $\mathbb{A}_\mathrm{S}$ can not be higher than that.
\end{examplenew}

\paragraph{Conclusion.} Our analysis shows that it is always possible to find a basis in which the matrix  of the differential equation satisfies the symmetry properties in eq.~\eqref{eq:weinzierl_sd_2} or eq.~\eqref{eq:symp_sym}. While the former corresponds to the statement of the observation of self-duality from ref.~\cite{Pogel:2024sdi}, the latter is not captured by that observation. Hence, we conclude that the observation holds for $N=2$ and all odd values of $N$, but there is the possibility that the self-duality fails starting for $N=2m\ge 4$. We note here that the observations of ref.~\cite{Pogel:2024sdi} rely on the explicit analysis of maximal cuts with $N\le 4$ and the equal-mass banana integrals. 
%In appendix~\ref{app:CY}, we prove that for equal-mass banana integrals, which are related to one-parameter families of Calabi-Yau varieties, $\bDelta_0$ is always symmetric. Reference~\cite{Pogel:2024sdi} studied two more cases with $N=4$, but they all corresponds to banana integrals with non-equal masses.
We thus see two options: either the observation of ref.~\cite{Pogel:2024sdi} holds, and $\bDelta_0$ is always symmetric, calling for a deeper explanation, or the examples considered in ref.~\cite{Pogel:2024sdi} were insufficient to capture the full breath of the complexity that may arise from Feynman integrals.

% !TEX root = main.tex

\section{Some comments on Galois symmetries}
\label{sec:galois}

In ref.~\cite{Pogel:2024sdi} it was pointed out that, when the basis of master integrals is rotated such that eq.~\eqref{eq:weinzierl_sd} is satisfied, square roots are introduced and there may be Galois symmetries relating the integrals. In the previous section, we have shown that the notion of self-duality of ref.~\cite{Pogel:2024sdi} can be recast in a basis-invariant manner, and that there is nothing special about the basis which manifests the symmetry in eq.~\eqref{eq:weinzierl_sd}. 
In this section, we review how these square roots arise.

\subsection{A very short review of Galois symmetry}
Galois symmetry is, very loosely speaking, the symmetry relating different solutions of a polynomial equation. More precisely, consider for example a polynomial $P(x)$ with rational coefficients,
\beq
P(x) = \sum_{i=1}^nc_n\,x^n\,,\qquad c_i\in \mathbb{Q}\,.
\eeq
While the coefficients of $P$ live in $\mathbb{Q}$, the roots of $P$ may live in a field extension, i.e., in a field $\KK$ containing $\mathbb{Q}$. Then, if $\sigma$ is a field automorphism of $\KK$ that fixes the elements of $\mathbb{Q}$ (meaning that $\sigma$ is a map from $\KK$ to $\KK$ that preserves the addition and the multiplication, and $\sigma(x)=x$ for all $x\in\mathbb{Q}$), then $\sigma$ transforms one root into another. Indeed, we have for $x$ a root:
\beq
P(\sigma(x)) =\sum_{i=1}^nc_n\,\sigma(x)^n = \sigma(P(x))  = \sigma(0) = 0\,.
\eeq

Galois symmetry is most easily illustrated for quadratic polynomials. Consider the polynomial
\beq
P(x) = a\,x^2+b\,x+c\,,\qquad a,b,c\in\mathbb{Q}\,,\quad a\neq 0\,.
\eeq
Its roots are
\beq
x_{\pm} = \frac{-b\pm \sqrt{\Delta}}{2a}\,,\qquad \Delta=b^2-4ac\,.
\eeq
If the discriminant $\Delta$ is the square of a rational number, then both roots of $P$ are rational. However, if $\Delta$ is not the square of any rational number, then the roots do not lie in $\mathbb{Q}$. Instead, they lie in the larger field
\beq
\mathbb{Q}\big(\sqrt{\Delta}\big) = \left\{\alpha+\beta\sqrt{\Delta}:\alpha,\beta\in\mathbb{Q}\right\}\,.
\eeq
The Galois group is in this case the group $\mathbb{Z}_2 = \{\textrm{id},\sigma\}$, which acts on $\mathbb{Q}(\sqrt{\Delta})$ by changing the sign of the root,
\beq
\sigma(\alpha+\beta\sqrt{\Delta}) = \alpha-\beta\sqrt{\Delta}\,,
\eeq
and so in particular it exchanges the two roots, 
\beq
\sigma(x_{\pm}) = x_{\mp}\,.
\eeq

\subsection{Galois symmetries from self-duality}
Reference~\cite{Pogel:2024sdi} observed that the rotation $\bM$ that relates ${\bA}^{\textrm{m.c}}(\bx)$ to $\widetilde{\bA}^{\textrm{m.c.}}(\bx)$ so that $\widetilde{\bA}^{\textrm{m.c.}}(\bx)$ is persymmetric (cf.~eq.~\eqref{eq:weinzierl_sd}) involves additional square roots (in particular the imaginary unit $i$ and $\sqrt{3}$ for the examples considered in ref.~\cite{Pogel:2024sdi}). As a consequence, Galois symmetries appear which relate different master integrals and different entries in the matrix $\widetilde{\bA}^{\textrm{m.c.}}(\bx)$. 

We again fix $\KK=\mathbb{Q}$ in the following for simplicity, and we assume that we are in the case where $\bDelta_0$ is symmetric (because otherwise the observation of self-duality of ref.~\cite{Pogel:2024sdi} would not hold in the first place). Then we know from Theorem~\ref{thm:main_sd} that all entries of $\bDelta_0$ are rational, and the matrix of the differential equation satisfies the symmetry property in eq.~\eqref{eq:sd_condition_2}.
The important point to note is that the most natural symmetry property is defined in terms of the intersection matrix, which is equal to $\bDelta_0$ up to an overall factor, and that the entries of $\bDelta_0$ are all rational. In other words, the symmetry properties of the differential equation matrix are naturally defined over $\mathbb{Q}$, and no additional square roots are needed.

Of course, one may change basis so that $\bDelta_0$ takes a different form, and we have seen that we can find a basis where we can fix $\bDelta_0$ to be any given real and symmetric matrix. In particular, there is a basis where $\bDelta_0$ becomes equal to $\bK_N$, cf.~eq.~\eqref{eq:weinzierl_sd_2}. As described in section~\ref{sec:symmetries}, the matrix $\widetilde{\bR}_K$ that describes the change of basis is obtained by diagonalising a matrix. It is well known that, if a matrix has rational entries, the eigenvalues and eigenvectors may not be rational, but they may be algebraic. Hence, by insisting to express the self-duality in a basis where $\bDelta'_0 = \bK_N$ as in eq.~\eqref{eq:weinzierl_sd_2}, we may end up introducing spurious square roots. Since the eigenvalues and eigenvectors are obtained by solving polynomial equations, they will give rise to the Galois symmetries related to the relevant field extension. While the discussion here is limited to the maximal cuts, the same square roots, and relations induced by the Galois symmetry that change the sign of the roots, also appear in the part of the differential equation matrix describing the non-maximal cut, as observed in ref.~\cite{Pogel:2024sdi}.

\begin{examplenew}{2}
Let us illustrate the previous discussion on the example of the unequal-mass sunrise integral from examples \ref{exSunrise}, \ref{exSunrise2} and \ref{exSunrise3}. 
We can rotate the basis of master integrals by a constant matrix leading to a rotation of the intersection matrix
\begin{align}
\bK_N=\widetilde{\bR}^{-1}_{\mathrm{S},R}\bDelta_{0,\mathrm{S}}\widetilde{\bR}^{-1,T}_{\mathrm{S},R}\,, \text{~~~with~~~} \widetilde{\bR}^{-1}_{\mathrm{S},R} =\sqrt{2}\left(
\begin{array}{cccc}
 -i & -\frac{1}{2 \sqrt{3}} & 0 & -i \\
 -1 & 0 & -\frac{1}{2} & 1 \\
 -1 & 0 & \frac{1}{2} & 1 \\
 -i & \frac{1}{2 \sqrt{3}} & 0 & -i \\
\end{array}
\right)\,.
\label{eq:SR_Galois_rot1}
\end{align}
After this transformation the symmetry relation for the differential equation matrix reads
\begin{align}
\label{eq:selfDualitySunrise2}
\widetilde{\bA}_\mathrm{S}^{\text{m.c.}}(\bx)=\bK_N\widetilde{\bA}_\mathrm{S}^{\text{m.c.},T}(\bx)\bK_N \,, \text{~~~with~~~} \widetilde{\bA}_\mathrm{S}^{\text{m.c.}}(\bx)=\widetilde{\bR}^{-1}_{\mathrm{S},R}\bA_\mathrm{S}^{\text{m.c.}}(\bx)\widetilde{\bR}_{\mathrm{S},R}\,,
\end{align}
which is precisely the self-duality observed in ref.~\cite{Pogel:2024sdi}. We clearly see how the transformation into the persymmetric form introduces spurious square roots in $\widetilde{\bA}_\mathrm{S}^{\text{m.c.}}(\bx)$, which arise from diagonalising $\bDelta_{0,\mathrm{S}}$. Therefore, we need to pass from $\mathbb{Q}$ to a field extension $\mathbb{Q}(i,\sqrt{3})$.\footnote{The factor $\sqrt{2}$ in eq.~\eqref{eq:SR_Galois_rot1} cancels in eq.~\eqref{eq:selfDualitySunrise2}.} As observed in ref.~\cite{Pogel:2024sdi}, this transformation to the persymmetric form is not unique. Another possible choice to obtain $\bK_N=\widetilde{\bR}^{-1}_{\mathrm{S},R,2}\bDelta_{0,\mathrm{S}}\widetilde{\bR}^{-1,T}_{\mathrm{S},R,2}$ would be 
\begin{align}
\widetilde{\bR}_{\mathrm{S},R,2}^{-1}=\frac{1}{\sqrt{2}}\left(
\begin{array}{cccc}
 4 i  & 0 & 0 & 0 \\
 0 & \frac{i}{\sqrt{3}} & -1 & 0 \\
 0 & \frac{i}{\sqrt{3}} & \phantom{-}1 & 0 \\
 0 & 0 & 0 & 4 i  \\
\end{array}
\right)\,.
\end{align}
This transformation leads to a different symmetry for the differential equation matrix, which is exactly the one found in ref.~\cite{Pogel:2024sdi}.
There is however nothing particularly special about the form in eq.~\eqref{eq:selfDualitySunrise2} of the self-duality. Indeed, we could even rotate in a way such that the intersection matrix becomes the identity
\begin{align}
\mathds{1}=\widetilde{\bR}^{-1}_{\mathrm{S},\mathbb{I}}\bDelta_{0,\mathrm{S}}\widetilde{\bR}^{-1,T}_{\mathrm{S},\mathds{1}}\,, \text{~~~with~~~} \widetilde{\bR}^{-1}_{\mathrm{S},\mathds{1}}=\left(
\begin{array}{cccc}
 0 & -\frac{i}{\sqrt{3}} & 0 & 0 \\
 0 & 0 & -i & 0 \\
 -2 i & 0 & 0 & -2 i \\
 -2 & 0 & 0 & 2 \\
\end{array}
\right) \,.
\end{align}
The required field extension is again $\mathbb{Q}(i,\sqrt{3})$, just like in the previous case. However, we can easily transform to yet another basis where a different field extension is required:
For example, we can choose to rotate $\bDelta_{0,\mathrm{S}}$ into any real symmetric matrix. Let us illustrate this by the transformation
\begin{align}
\text{diag}(2,3,5,6)=\widetilde{\bR}^{-1}_{\mathrm{S},2356}\bDelta_{0,\mathrm{S}}\widetilde{\bR}^{-1,T}_{\mathrm{S},2356}\,, 
\end{align}
with
\begin{align}
 \widetilde{\bR}^{-1}_{\mathrm{S},2356}=\left(
\begin{array}{cccc}
 0 & -i \sqrt{\frac{2}{3}} & 0 & 0 \\
 0 & 0 & -i \sqrt{3} & 0 \\
 -2 i \sqrt{5} & 0 & 0 & -2 i \sqrt{5} \\
 -2 \sqrt{6} & 0 & 0 & 2 \sqrt{6} \\
\end{array}
\right) \,.
\end{align}
The required field extension is now $\mathbb{Q}(i,\sqrt{2},\sqrt{3},\sqrt{5})$. 
 We see that the square roots that are introduced depend on the choice of basis, cf. also ref.~\cite{Pogel:2024sdi}.
\end{examplenew}

% !TEX root = main.tex

\section{Conclusion}
\label{sec:conclusion}

The main purpose of this paper was to analyse the self-duality and the Galois symmetry for maximal cuts recently introduced in ref.~\cite{Pogel:2024sdi}, and to see if and how it connects to the natural self-duality that comes from twisted cohomology. 
The self-duality of ref.~\cite{Pogel:2024sdi} is formulated at the level of the differential equations for a canonical basis of maximal cuts, and the existence of such a canonical basis is itself conjectural in the general case. 

In order to be able to formulate rigorous mathematical proofs, we have identified a class of differential equations, dubbed in C-form (see Definition~\ref{def:canon}), which captures all known examples of differential equations in canonical form. On a technical level, we argued that to every differential equation in C-form we can associate the following data:
\begin{enumerate}
\item an algebra $\cA$ of functions,
\item a finite-dimensional vector space $\bbA\subseteq Z^1(\cA)$ of differential one-forms,
\item a Lie algebra $\mathfrak{g}_{\bbA}$, and its universal enveloping algebra
$A_{\bbA}$,
\item a representation $\rho$ of $\mathfrak{g}_{\bbA}$,
\end{enumerate}
We could then show that, if both the differential equation for the Feynman integrals and their duals are transformed into a canonical basis which is also in C-form (which, we repeat, is the case for all known examples), then the intersection matrix between twisted co-cycles and their duals is constant. This constant intersection matrix in turn expresses the fact that the representation $\check{\rho}$ of $\mathfrak{g}_{\bbA}$ obtained from the dual Feynman integrals is equivalent to the dual $\rho^*$of the representation $\rho$ obtained from the Feynman integrals. Our result shows that there is a connection between canonical bases, Lie algebra representations and intersection theory, which we believe is interesting in its own right and may deserve more exploration in the future.

We then focused on the case of maximal cuts, where the representation $\rho$ is irreducible and self-dual. We find that, when combined with the constancy of the intersection matrix, this implies constraints on the representation $\rho$, which can be translated into symmetry properties of the differential equation matrix for the maximal cuts. We showed that, up to a change of basis, we can reproduce the observation of self-duality of ref.~\cite{Pogel:2024sdi} for sectors with a number $N$ of master integrals with either $N=2$ or $N$ odd. If $N$ is even and at least four, we see that there may be an additional symmetry property which is not accounted for by the observations in ref.~\cite{Pogel:2024sdi}. Whether or not this additional symmetry arises for Feynman integrals remains to be seen, and may deserve further investigations.

\section*{Acknowledgements}
The work of CS is supported by the CRC 1639 ``NuMeriQS'', and the work of CD and FP is funded by the European Union
(ERC Consolidator Grant LoCoMotive 101043686). Views
and opinions expressed are however those of the author(s)
only and do not necessarily reflect those of the European
Union or the European Research Council. Neither the
European Union nor the granting authority can be held
responsible for them.

\appendix

% !TEX root = main.tex

\section{Proof of linear independence of iterated integrals}
\label{app:DDMS}
In this appendix we collect the details of the proofs omitted in section~\ref{sec:line_dep}. We start by reviewing how to find a basis for the algebra $A_{\bbA}$ in appendix~\ref{app:DDMS_1}, and we present the proof of Theorem~\ref{thm:lin_dep}  in appendix~\ref{app:DDMS_2}. 

\subsection{Quadratic algebras}
\label{app:DDMS_1}
In this appendix we review how to find a basis for the algebra $A_{\bbA}$ generated by the non-commutative variables $t_i$ that appear in the generating functional in eq.~\eqref{eq:genfunc_original}.

The algebra ${A}_{\mathbb{A}}$ can be explicitly described as follows. Define ${A}_{\mathbb{A},1}$ to be the $\KK$-vector space generated by the $t_i$. Then we have
\beq
{A}_{\mathbb{A}} = T\left( {A}_{\mathbb{A},1}\right)/I_{\mathbb{A}}\,,
\eeq
where $T\left( {A}_{\mathbb{A},1}\right)$ is the tensor algebra of $ {A}_{\mathbb{A},1}$ and $I_{\mathbb{A}}$ is the two-sided ideal generated by the relations in eq.~\eqref{eq:t_commutators}. Because the differential forms $\omega_i$ are closed, the relations in eq.~\eqref{eq:t_commutators} are homogeneous of degree two in the $t_i$. This identifies ${A}_{\mathbb{A}}$ as a \emph{quadratic algebra}, cf.~ref.~\cite{AIF_1987__37_4_191_0}. Note that ${A}_{\mathbb{A}}$ is graded by the length of the words in the letters $t_i$. 
It is known how to find a linear basis for a quadratic algebra. Let $\mathcal{G}$ be a Groebner basis for the ideal ${I}_{\mathbb{A}}$ for some monomial ordering $\prec$ that is compatible with the length of the words, i.e., if $w_1$ and $w_2$ are words of lengths $|w_1|< |w_2|$, then $w_1\prec w_2$. An example of such an ordering is simply the lexicographic ordering. We set 
\beq
\mathcal{B}_{\mathbb{A}} = \big\{w\in T\left( {A}_{\mathbb{A},1}\right): \lm(g)\nmid w\,,\textrm{ for all }g\in\mathcal{G}\big\}\,,
\eeq
where $\lm(g)$ denotes the leading monomial of $g$ for the chosen monomial ordering and $w_1\nmid w_2$ means that $w_1$ does not divide $w_2$, i.e., $w_2$ does not lie in the two-sided ideal generated by $w_1$. Then it can be shown that the set $\mathcal{B}_{\mathbb{A}}$ is a basis for $A_{\mathbb{A}}$ (cf.,~e.g.,~ref.~\cite{shelper}). Note that are algorithms to construct a Groebner basis, and thus a basis $\cB_{\bbA}$.

Having identified a basis $\cB_{\bbA}$, let us introduce some notations that will be useful in appendix~\ref{app:DDMS_2} to prove Theorem~\ref{thm:lin_dep}. We start by interpreting elements of $A_\bbA$ as kets, and a basis of $A_\bbA$ is given by the elements $|w\rangle$ with $w\in\cB_{\bbA}$. The elements of the dual $A_\bbA^*$ are bras with basis $\langle w|$ with $w\in\cB_{\bbA}$, and there is a scalar product 
\beq
\langle .|.\rangle : A_{\bbA}^* \otimes_\KK A_{\bbA} \to \KK;\quad w\otimes w' \mapsto \langle w|w'\rangle\,,
\eeq
and $\langle w|w'\rangle = \delta_{ww'}$ for $w,w'\in\cB_{\bbA}$. In other words, the elements of $\cB_{\bbA}$ define an orthonormal basis for this scalar product. There is also a completeness relation,
\beq
\sum_{w\in \cB_{\bbA}} |w\rangle\langle w| = \mathds{1}\,.
\eeq
We can use this completeness relation to decompose any element $|u\rangle\in A_{\bbA}$ into the basis
\beq
|u\rangle = \sum_{w\in \cB_{\bbA}} |w\rangle\langle w|u\rangle\,.
\eeq
If we introduce the shorthand $I(w') = I_{\gamma}(\omega_{i_1},\ldots, \omega_{i_k})$ for $w' = t_{i_1}\cdots t_{i_k}$, we can write the generating function from eq.~\eqref{eq:genfunc_original} in the form
\beq
|\mathbb{G}\rangle = \sum_{w'} |w'\rangle\, I(w') = \sum_{w\in \cB_{\bbA}} |w\rangle\,J(w)\,,
\eeq
with 
\beq
J(w) = \sum_{w'} \langle w|w'\rangle\, I(w')\,. 
\eeq
Note that $\langle w|w'\rangle=0$ unless $|w|=|w'|$. 
%We will often write $|\mathbb{G}\rangle$ instead of $\mathbb{G}$.

\subsection{Proof of Theorem~\ref{thm:lin_dep}}
\label{app:DDMS_2}
We now present the proof of Theorem~\ref{thm:lin_dep}, which is essentially identical to the proof presented in ref.~\cite{deneufchatel:hal-00558773} for the case of a single variable.

\underline{$1.\Rightarrow 2.$:} Consider a finite set $\cS\subset \cB_{\bbA}$. We want to show that a relation of the type $\sum_{w\in \cS} f_w\, J(w)=0$ with $f_w\in\cF$ 
can only be satisfied for $f_w=0$. Equivalently, consider the dual vector
\beq
\langle P| = \sum_{w\in \cS} f_w\, \langle w|\,,
\eeq
and the map 
\beq
\phi_{\mathbb{G}}: \cF\otimes_\KK A_{\mathbb{A}}^* \to \cF \otimes_\KK V_{\bbA}; \quad \langle p| \mapsto \langle p|\mathbb{G}\rangle = \sum_{w\in \cS} f_w\, J(w)\,.
\eeq
Our goal is to show that $\Ker\phi_{\mathbb{G}}=\{0\}$. The strategy is to assume $\Ker\phi_{\mathbb{G}}\neq\{0\}$ and to show that this leads to a contradiction. Let $\langle P|$ be the non-zero element of $\Ker\phi_{\mathbb{G}}$ with the smallest value of $\lm(P)$ (where, as defined in the previous subsection, $\lm(P)$ the leading monomial for the chosen mononial ordering used to define the basis $\cB_{\bbA}$). If $\lm(P) = \langle w_0|$, we can write
\beq
\langle P| = f_{w_0}\langle w_0|+\sum_{\substack{w\in \cB_{\bbA}\\w\prec w_0}} f_w\langle w|\,,\qquad f_{w_0}\neq 0\,.
\eeq
Note that we must have $|w_0|>0$. Indeed, if $|w_0|=0$, then $w_0=1$ is the empty word, and we have $\phi_{\mathbb{G}}(P) = f_{w_0}$. Since $\langle P| \in \Ker\phi_{\mathbb{G}}$, this implies $f_{w_0}=0$, but then $\langle P|=0$, and we assumed that $\langle P|$ is non zero.

We define $\langle Q| := \tfrac{1}{f_{w_0}}\langle P|$. 
We differentiate $\phi_{\mathbb{G}}(Q) = \langle Q|\mathbb{G}\rangle=0$ and use eq.~\eqref{eq:gen_func_DEQ} . We obtain (note that the differential only acts on the coefficients, which are functions):
\beq\bsp
0=\rd \phi_{\mathbb{G}}(Q) &\,= \langle \rd Q|\mathbb{G}\rangle + \langle Q|\rd\mathbb{G}\rangle\\
&\,= \langle \rd Q|\mathbb{G}\rangle + \sum_{i=1}^p\omega_i\langle Q|t_i\mathbb{G}\rangle\\
&\,= \langle \rd Q|\mathbb{G}\rangle + \sum_{i=1}^p\omega_i\langle t_i^{\dagger}Q|\mathbb{G}\rangle\\
&\,= \langle \rd Q + \sum_{i=1}^p\omega_i t_i^{\dagger}Q|\mathbb{G}\rangle\,,
\esp\eeq
with 
\beq\label{eq:Q_exp}
\langle \rd Q + \sum_{i=1}^p\omega_i t_i^{\dagger}Q| = \sum_{i=1}^p\omega_i \langle t_i^{\dagger}w_0| + \sum_{\substack{w\in \cB_{\bbA}\\w\prec w_0}} \left[\rd f_w\langle w| + \sum_{i=1}^p\omega_i \langle t_i^{\dagger}w|\right]\,.
\eeq
Note that the action of the hermitian conjugate $t_i^{\dagger}$ lowers the length of a word. Indeed, for any words $w_1,w_2$, we have $\langle t_i^{\dagger}w_1|w_2\rangle = \langle w_1|t_iw_2\rangle = 0$ unless $|w_1| = |t_iw_2| = |w_2|+1$, and $| t_i^{\dagger}w_1| = |w_2| = |w_1|-1$. Since our monomial ordering is compatible with the length of the words, it follows that all monomials are less than $w_0$. Since $\langle P|$ was the non-zero element in the kernel with the smallest value of $\lm(P)$, we must then have\footnote{Strictly speaking, both sides of eq.~\eqref{eq:Q_exp} do not lie in $\Ker\phi_{\mathbb{G}}$, but in $Z^1(\cA)\otimes_\KK\Ker\phi_{\mathbb{G}}$. We can of course expand the differential forms into the basis $\rd x_i$, and we reach the same conclusion for each component. In the following we prefer to work with $Z^1(\cA)\otimes_\KK\Ker\phi_{\mathbb{G}}$ to keep the notations compact.}
\beq
\langle\rd Q| + \sum_{i=1}^p\omega_i \langle t_i^{\dagger}Q| = 0\,.
\eeq
Hence, for all words $w$ (not just those from $\cB_{\bbA}$), we must have
\beq\label{eq:proof_eq_1}
\rd\langle Q|w\rangle = - \sum_{i=1}^p\omega_i \langle t_i^{\dagger}Q|w\rangle = - \sum_{i=1}^p\omega_i \langle Q|t_iw\rangle\,.
\eeq

Assume now that $|w| = |w_0|$, and so $|t_iw| =|w|+1>|w_0|$. Since all monomials in $\langle Q|$ have length at most $|w_0|$ (because $w_0$ was the leading monomial, and hence the largest for our monomial ordering), we must have $\langle Q|t_iw\rangle =0$. Hence, eq.~\eqref{eq:proof_eq_1} implies
\beq\label{eq:app_proof_x}
\rd\langle Q|w\rangle = 0 \,,\qquad \textrm{for all } |w|=|w_0|\,.
\eeq

Since $|w_0|>0$, $w_0$ contains at least one letter $t_k$, and so we can write $\langle w_0 |= \langle t_kv|$, for some (possibly empty) word $v$. Note that we have $\langle Q|t_kv\rangle = \langle Q|w_0\rangle=1$. If we now let $w=v$ in eq.~\eqref{eq:proof_eq_1}, we find,
\beq\label{eq:proof_eq_2}
\rd\langle Q|v\rangle =   -\sum_{i=1}^p\omega_i\,\alpha_i\,, 
\eeq
with $\alpha_i :=  -\langle Q|t_iv\rangle$. Since $|t_iv| = |v|+1=|w_0|$, eq.~\eqref{eq:app_proof_x} implies that $\alpha_i\in \KK$ for all $1\le i\le p$. Moreover, $\langle Q|v\rangle \in \cF$. Hence, the left-hand side of eq.~\eqref{eq:proof_eq_2} lies in $\rd\cF$, while the right-hand side lies in $\bbA$. But, since $\rd\cF\cap \bbA=\{0\}$, this means
\beq\label{eq:proof_eq_3}
\rd\langle Q|v\rangle =   \sum_{i=1}^p\omega_i\,\alpha_i =0\,,
\eeq
and since the $\omega_i$ form a basis of $\bbA$, we must have $\alpha_i=0$, for all $1\le i\le p$. This is a contradiction, because we also have for $i=k$ that $\langle Q|t_kv\rangle = \langle Q|w_0\rangle=1$.
\vskip0.5cm

\underline{$2. \Rightarrow 1.$:} We need to show that the linear independence of the iterated integrals implies $\rd\cF\cap\bbA=\{0\}$. In other words, we need to show that, if there are $\alpha_1,\ldots,\alpha_p\in\KK$ and $f\in\cF$ such that
\beq\label{eq:ansatz}
\sum_{i=1}^p\alpha_i\omega_i = \rd f\,,
\eeq
then necessarily $\alpha_1=\ldots=\alpha_p=\rd f=0$. 

Assume that eq.~\eqref{eq:ansatz} holds.
Let $\langle P| = -f\langle 1| + \sum_{i=1}^p\alpha_i\langle t_i|$. We have
\beq
\langle P|\mathbb{G}\rangle = -f \langle 1|\mathbb{G}\rangle + \sum_{i=1}^p\alpha_i\langle t_i|\mathbb{G}\rangle = -f + \sum_{i=1}^p\alpha_i\, I_{\gamma}(\omega_i)\,,
\eeq
where we used the fact that $J(t_i) = I_{\gamma}(\omega_i)$. Equation~\eqref{eq:ansatz} implies $\rd\langle P|\mathbb{G}\rangle = 0$. Hence, there is a constant $\lambda \in \mathbb{C}$ such that $\langle P|\mathbb{G}\rangle=\lambda$. Let 
\beq
\langle Q| := \langle P| - \lambda \langle 1|\,.
\eeq
Then clearly
\beq
0 = \langle Q|\mathbb{G}\rangle = -(f+\lambda)\cdot 1 + \sum_{i=1}^p\alpha_i\, I_{\gamma}(\omega_i)\,.
\eeq
Since $1$ and $I_{\gamma}(\omega_i)$ are linearly independent over $\cF$ by hypothesis, we must have
\beq
\alpha_1=\ldots=\alpha_p=0 \textrm{~~~and~~~} f = -\lambda\,.
\eeq

\bibliographystyle{JHEP}
\bibliography{refs.bib}

\providecommand{\href}[2]{#2}\begingroup\raggedright\begin{thebibliography}{10}

\bibitem{THOOFT1972189}
G.~{'t Hooft} and M.~Veltman, {\it Regularization and renormalization of gauge fields},  {\em Nuclear Physics B} {\bf 44} (1972), no.~1 189--213.

\bibitem{Bourjaily:2022bwx}
J.~L. Bourjaily et~al., {\it {Functions Beyond Multiple Polylogarithms for Precision Collider Physics}},  in {\em {Snowmass 2021}}, 3, 2022.
\newblock \href{http://arxiv.org/abs/2203.07088}{{\tt 2203.07088}}.

\bibitem{Henn:2013pwa}
J.~M. Henn, {\it {Multiloop integrals in dimensional regularization made simple}},  {\em Phys. Rev. Lett.} {\bf 110} (2013) 251601, [\href{http://arxiv.org/abs/1304.1806}{{\tt 1304.1806}}].

\bibitem{Mastrolia:2018uzb}
P.~Mastrolia and S.~Mizera, {\it {Feynman Integrals and Intersection Theory}},  {\em JHEP} {\bf 02} (2019) 139, [\href{http://arxiv.org/abs/1810.03818}{{\tt 1810.03818}}].

\bibitem{Mizera:2017rqa}
S.~Mizera, {\it {Scattering Amplitudes from Intersection Theory}},  {\em Phys. Rev. Lett.} {\bf 120} (2018), no.~14 141602, [\href{http://arxiv.org/abs/1711.00469}{{\tt 1711.00469}}].

\bibitem{Mizera:2019gea}
S.~Mizera, {\em {Aspects of Scattering Amplitudes and Moduli Space Localization}}.
\newblock PhD thesis, Princeton, Inst. Advanced Study, 2020.
\newblock \href{http://arxiv.org/abs/1906.02099}{{\tt 1906.02099}}.

\bibitem{Mizera:2019ose}
S.~Mizera, {\it {Status of Intersection Theory and Feynman Integrals}},  {\em PoS} {\bf MA2019} (2019) 016, [\href{http://arxiv.org/abs/2002.10476}{{\tt 2002.10476}}].

\bibitem{matsumoto_relative_2019-1}
K.~Matsumoto, {\it Relative twisted homology and cohomology groups associated with {Lauricella}'s ${F}_d$},  Dec., 2019.

\bibitem{Abreu:2019wzk}
S.~Abreu, R.~Britto, C.~Duhr, E.~Gardi, and J.~Matthew, {\it {From positive geometries to a coaction on hypergeometric functions}},  {\em JHEP} {\bf 02} (2020) 122, [\href{http://arxiv.org/abs/1910.08358}{{\tt 1910.08358}}].

\bibitem{Britto:2021prf}
R.~Britto, S.~Mizera, C.~Rodriguez, and O.~Schlotterer, {\it {Coaction and double-copy properties of configuration-space integrals at genus zero}},  {\em JHEP} {\bf 05} (2021) 053, [\href{http://arxiv.org/abs/2102.06206}{{\tt 2102.06206}}].

\bibitem{Caron-Huot:2021xqj}
S.~Caron-Huot and A.~Pokraka, {\it {Duals of Feynman integrals. Part I. Differential equations}},  {\em JHEP} {\bf 12} (2021) 045, [\href{http://arxiv.org/abs/2104.06898}{{\tt 2104.06898}}].

\bibitem{Caron-Huot:2021iev}
S.~Caron-Huot and A.~Pokraka, {\it {Duals of Feynman Integrals. Part II. Generalized unitarity}},  {\em JHEP} {\bf 04} (2022) 078, [\href{http://arxiv.org/abs/2112.00055}{{\tt 2112.00055}}].

\bibitem{Chen:2022lzr}
J.~Chen, X.~Jiang, C.~Ma, X.~Xu, and L.~L. Yang, {\it {Baikov representations, intersection theory, and canonical Feynman integrals}},  {\em JHEP} {\bf 07} (2022) 066, [\href{http://arxiv.org/abs/2202.08127}{{\tt 2202.08127}}].

\bibitem{Giroux:2022wav}
M.~Giroux and A.~Pokraka, {\it {Loop-by-loop differential equations for dual (elliptic) Feynman integrals}},  {\em JHEP} {\bf 03} (2023) 155, [\href{http://arxiv.org/abs/2210.09898}{{\tt 2210.09898}}].

\bibitem{Gasparotto:2023roh}
F.~Gasparotto, S.~Weinzierl, and X.~Xu, {\it {Real time lattice correlation functions from differential equations}},  {\em JHEP} {\bf 06} (2023) 128, [\href{http://arxiv.org/abs/2305.05447}{{\tt 2305.05447}}].

\bibitem{Fontana:2023amt}
G.~Fontana and T.~Peraro, {\it {Reduction to master integrals via intersection numbers and polynomial expansions}},  {\em JHEP} {\bf 08} (2023) 175, [\href{http://arxiv.org/abs/2304.14336}{{\tt 2304.14336}}].

\bibitem{Bhardwaj:2023vvm}
R.~Bhardwaj, A.~Pokraka, L.~Ren, and C.~Rodriguez, {\it {A double copy from twisted (co)homology at genus one}},  {\em JHEP} {\bf 07} (2024) 040, [\href{http://arxiv.org/abs/2312.02148}{{\tt 2312.02148}}].

\bibitem{De:2023xue}
S.~De and A.~Pokraka, {\it {Cosmology meets cohomology}},  {\em JHEP} {\bf 03} (2024) 156, [\href{http://arxiv.org/abs/2308.03753}{{\tt 2308.03753}}].

\bibitem{Chen:2023kgw}
J.~Chen, B.~Feng, and L.~L. Yang, {\it {Intersection theory rules symbology}},  {\em Sci. China Phys. Mech. Astron.} {\bf 67} (2024), no.~2 221011, [\href{http://arxiv.org/abs/2305.01283}{{\tt 2305.01283}}].

\bibitem{Duhr:2023bku}
C.~Duhr and F.~Porkert, {\it {Feynman integrals in two dimensions and single-valued hypergeometric functions}},  {\em JHEP} {\bf 02} (2024) 179, [\href{http://arxiv.org/abs/2309.12772}{{\tt 2309.12772}}].

\bibitem{Brunello:2023rpq}
G.~Brunello, V.~Chestnov, G.~Crisanti, H.~Frellesvig, M.~K. Mandal, and P.~Mastrolia, {\it {Intersection Numbers, Polynomial Division and Relative Cohomology}},  \href{http://arxiv.org/abs/2401.01897}{{\tt 2401.01897}}.

\bibitem{Crisanti:2024onv}
G.~Crisanti and S.~Smith, {\it {Feynman Integral Reductions by Intersection Theory with Orthogonal Bases and Closed Formulae}},  \href{http://arxiv.org/abs/2405.18178}{{\tt 2405.18178}}.

\bibitem{Duhr:2024rxe}
C.~Duhr, F.~Porkert, C.~Semper, and S.~F. Stawinski, {\it {Twisted Riemann bilinear relations and Feynman integrals}},  \href{http://arxiv.org/abs/2407.17175}{{\tt 2407.17175}}.

\bibitem{Pogel:2022ken}
S.~P\"ogel, X.~Wang, and S.~Weinzierl, {\it {Taming Calabi-Yau Feynman Integrals: The Four-Loop Equal-Mass Banana Integral}},  {\em Phys. Rev. Lett.} {\bf 130} (2023), no.~10 101601, [\href{http://arxiv.org/abs/2211.04292}{{\tt 2211.04292}}].

\bibitem{Pogel:2022vat}
S.~P\"ogel, X.~Wang, and S.~Weinzierl, {\it {Bananas of equal mass: any loop, any order in the dimensional regularisation parameter}},  {\em JHEP} {\bf 04} (2023) 117, [\href{http://arxiv.org/abs/2212.08908}{{\tt 2212.08908}}].

\bibitem{Pogel:2022yat}
S.~P\"ogel, X.~Wang, and S.~Weinzierl, {\it {The three-loop equal-mass banana integral in \ensuremath{\varepsilon}-factorised form with meromorphic modular forms}},  {\em JHEP} {\bf 09} (2022) 062, [\href{http://arxiv.org/abs/2207.12893}{{\tt 2207.12893}}].

\bibitem{Pogel:2024sdi}
S.~P\"ogel, X.~Wang, S.~Weinzierl, K.~Wu, and X.~Xu, {\it {Self-dualities and Galois symmetries in Feynman integrals}},  \href{http://arxiv.org/abs/2407.08799}{{\tt 2407.08799}}.

\bibitem{Tkachov:1981wb}
F.~V. Tkachov, {\it {A theorem on analytical calculability of 4-loop renormalization group functions}},  {\em Phys. Lett. B} {\bf 100} (1981) 65--68.

\bibitem{Chetyrkin:1981qh}
K.~G. Chetyrkin and F.~V. Tkachov, {\it {Integration by parts: The algorithm to calculate $\beta$-functions in 4 loops}},  {\em Nucl. Phys. B} {\bf 192} (1981) 159--204.

\bibitem{Smirnov:2010hn}
A.~V. Smirnov and A.~V. Petukhov, {\it {The Number of Master Integrals is Finite}},  {\em Lett. Math. Phys.} {\bf 97} (2011) 37--44, [\href{http://arxiv.org/abs/1004.4199}{{\tt 1004.4199}}].

\bibitem{Bitoun:2017nre}
T.~Bitoun, C.~Bogner, R.~P. Klausen, and E.~Panzer, {\it {Feynman integral relations from parametric annihilators}},  {\em Lett. Math. Phys.} {\bf 109} (2019), no.~3 497--564, [\href{http://arxiv.org/abs/1712.09215}{{\tt 1712.09215}}].

\bibitem{Kotikov:1990kg}
A.~V. Kotikov, {\it {Differential equations method: New technique for massive Feynman diagrams calculation}},  {\em Phys. Lett. B} {\bf 254} (1991) 158--164.

\bibitem{Kotikov:1991hm}
A.~V. Kotikov, {\it {Differential equations method: The Calculation of vertex type Feynman diagrams}},  {\em Phys. Lett. B} {\bf 259} (1991) 314--322.

\bibitem{Kotikov:1991pm}
A.~V. Kotikov, {\it {Differential equation method: The Calculation of N point Feynman diagrams}},  {\em Phys. Lett. B} {\bf 267} (1991) 123--127. [Erratum: Phys.Lett.B 295, 409--409 (1992)].

\bibitem{Gehrmann:1999as}
T.~Gehrmann and E.~Remiddi, {\it {Differential equations for two loop four point functions}},  {\em Nucl. Phys. B} {\bf 580} (2000) 485--518, [\href{http://arxiv.org/abs/hep-ph/9912329}{{\tt hep-ph/9912329}}].

\bibitem{ChenSymbol}
K.~T. Chen, {\it {Iterated path integrals}},  {\em Bull.\ Amer.\ Math.\ Soc.} {\bf 83} (1977) 831.

\bibitem{Adams:2018yfj}
L.~Adams and S.~Weinzierl, {\it {The $\varepsilon$-form of the differential equations for Feynman integrals in the elliptic case}},  {\em Phys. Lett. B} {\bf 781} (2018) 270--278, [\href{http://arxiv.org/abs/1802.05020}{{\tt 1802.05020}}].

\bibitem{Broedel:2018rwm}
J.~Broedel, C.~Duhr, F.~Dulat, B.~Penante, and L.~Tancredi, {\it {From modular forms to differential equations for Feynman integrals}},  in {\em {KMPB Conference}: {Elliptic Integrals, Elliptic Functions and Modular Forms in Quantum Field Theory}}, pp.~107--131, 2019.
\newblock \href{http://arxiv.org/abs/1807.00842}{{\tt 1807.00842}}.

\bibitem{Adams:2018bsn}
L.~Adams, E.~Chaubey, and S.~Weinzierl, {\it {Planar Double Box Integral for Top Pair Production with a Closed Top Loop to all orders in the Dimensional Regularization Parameter}},  {\em Phys. Rev. Lett.} {\bf 121} (2018), no.~14 142001, [\href{http://arxiv.org/abs/1804.11144}{{\tt 1804.11144}}].

\bibitem{Dlapa:2022wdu}
C.~Dlapa, J.~M. Henn, and F.~J. Wagner, {\it {An algorithmic approach to finding canonical differential equations for elliptic Feynman integrals}},  {\em JHEP} {\bf 08} (2023) 120, [\href{http://arxiv.org/abs/2211.16357}{{\tt 2211.16357}}].

\bibitem{Muller:2022gec}
H.~M\"uller and S.~Weinzierl, {\it {A Feynman integral depending on two elliptic curves}},  {\em JHEP} {\bf 07} (2022) 101, [\href{http://arxiv.org/abs/2205.04818}{{\tt 2205.04818}}].

\bibitem{Klemm:2024wtd}
A.~Klemm, C.~Nega, B.~Sauer, and J.~Plefka, {\it {Calabi-Yau periods for black hole scattering in classical general relativity}},  {\em Phys. Rev. D} {\bf 109} (2024), no.~12 124046, [\href{http://arxiv.org/abs/2401.07899}{{\tt 2401.07899}}].

\bibitem{Bogner:2019lfa}
C.~Bogner, S.~M\"uller-Stach, and S.~Weinzierl, {\it {The unequal mass sunrise integral expressed through iterated integrals on $\overline{\mathcal M}_{1,3}$}},  {\em Nucl. Phys. B} {\bf 954} (2020) 114991, [\href{http://arxiv.org/abs/1907.01251}{{\tt 1907.01251}}].

\bibitem{Gorges:2023zgv}
L.~G\"orges, C.~Nega, L.~Tancredi, and F.~J. Wagner, {\it {On a procedure to derive \ensuremath{\epsilon}-factorised differential equations beyond polylogarithms}},  {\em JHEP} {\bf 07} (2023) 206, [\href{http://arxiv.org/abs/2305.14090}{{\tt 2305.14090}}].

\bibitem{Henn:2014qga}
J.~M. Henn, {\it {Lectures on differential equations for Feynman integrals}},  {\em J. Phys. A} {\bf 48} (2015) 153001, [\href{http://arxiv.org/abs/1412.2296}{{\tt 1412.2296}}].

\bibitem{Lee:2014ioa}
R.~N. Lee, {\it {Reducing differential equations for multiloop master integrals}},  {\em JHEP} {\bf 04} (2015) 108, [\href{http://arxiv.org/abs/1411.0911}{{\tt 1411.0911}}].

\bibitem{Gituliar:2017vzm}
O.~Gituliar and V.~Magerya, {\it {Fuchsia: a tool for reducing differential equations for Feynman master integrals to epsilon form}},  {\em Comput. Phys. Commun.} {\bf 219} (2017) 329--338, [\href{http://arxiv.org/abs/1701.04269}{{\tt 1701.04269}}].

\bibitem{Meyer:2017joq}
C.~Meyer, {\it {Algorithmic transformation of multi-loop master integrals to a canonical basis with CANONICA}},  {\em Comput. Phys. Commun.} {\bf 222} (2018) 295--312, [\href{http://arxiv.org/abs/1705.06252}{{\tt 1705.06252}}].

\bibitem{Lee:2020zfb}
R.~N. Lee, {\it {Libra: A package for transformation of differential systems for multiloop integrals}},  {\em Comput. Phys. Commun.} {\bf 267} (2021) 108058, [\href{http://arxiv.org/abs/2012.00279}{{\tt 2012.00279}}].

\bibitem{Henn:2020lye}
J.~Henn, B.~Mistlberger, V.~A. Smirnov, and P.~Wasser, {\it {Constructing d-log integrands and computing master integrals for three-loop four-particle scattering}},  {\em JHEP} {\bf 04} (2020) 167, [\href{http://arxiv.org/abs/2002.09492}{{\tt 2002.09492}}].

\bibitem{Dlapa:2020cwj}
C.~Dlapa, J.~Henn, and K.~Yan, {\it {Deriving canonical differential equations for Feynman integrals from a single uniform weight integral}},  {\em JHEP} {\bf 05} (2020) 025, [\href{http://arxiv.org/abs/2002.02340}{{\tt 2002.02340}}].

\bibitem{Ahmed:2024tsg}
T.~Ahmed, E.~Chaubey, M.~Kaur, and S.~Maggio, {\it {Two-loop non-planar four-point topology with massive internal loop}},  {\em JHEP} {\bf 05} (2024) 064, [\href{http://arxiv.org/abs/2402.07311}{{\tt 2402.07311}}].

\bibitem{Britto:2024mna}
R.~Britto, C.~Duhr, H.~S. Hannesdottir, and S.~Mizera, {\it {Cutting-Edge Tools for Cutting Edges}},  2, 2024.
\newblock \href{http://arxiv.org/abs/2402.19415}{{\tt 2402.19415}}.

\bibitem{Anastasiou:2002yz}
C.~Anastasiou and K.~Melnikov, {\it {Higgs boson production at hadron colliders in NNLO QCD}},  {\em Nucl. Phys. B} {\bf 646} (2002) 220--256, [\href{http://arxiv.org/abs/hep-ph/0207004}{{\tt hep-ph/0207004}}].

\bibitem{Anastasiou:2003yy}
C.~Anastasiou, L.~J. Dixon, K.~Melnikov, and F.~Petriello, {\it {Dilepton rapidity distribution in the Drell-Yan process at NNLO in QCD}},  {\em Phys. Rev. Lett.} {\bf 91} (2003) 182002, [\href{http://arxiv.org/abs/hep-ph/0306192}{{\tt hep-ph/0306192}}].

\bibitem{BAIKOV1997347}
P.~Baikov, {\it Explicit solutions of the multi-loop integral recurrence relations and its application},  {\em Nuclear Instruments and Methods in Physics Research Section A: Accelerators, Spectrometers, Detectors and Associated Equipment} {\bf 389} (1997), no.~1 347--349. New Computing Techniques in Physics Research V.

\bibitem{LEE2010474}
R.~Lee, {\it Space--time dimensionality d as complex variable: Calculating loop integrals using dimensional recurrence relation and analytical properties with respect to d},  {\em Nuclear Physics B} {\bf 830} (2010), no.~3 474--492.

\bibitem{yoshida_hypergeometric_1997}
M.~Yoshida, {\em Hypergeometric {Functions}, {My} {Love}}, vol.~32 of {\em Aspects of {Mathematics}}.
\newblock Vieweg+Teubner Verlag, Wiesbaden, 1997.

\bibitem{aomoto_theory_2011}
K.~Aomoto and M.~Kita, {\em Theory of {Hypergeometric} {Functions}}.
\newblock Springer {Monographs} in {Mathematics}. Springer Japan, 2011.

\bibitem{kita_intersection_1994-2}
M.~Kita and M.~Yoshida, {\it Intersection {Theory} for {Twisted} {Cycles}},  {\em Mathematische Nachrichten} {\bf 166} (1994), no.~1 287--304.

\bibitem{Chestnov:2022alh}
V.~Chestnov, F.~Gasparotto, M.~K. Mandal, P.~Mastrolia, S.~J. Matsubara-Heo, H.~J. Munch, and N.~Takayama, {\it {Macaulay matrix for Feynman integrals: linear relations and intersection numbers}},  {\em JHEP} {\bf 09} (2022) 187, [\href{http://arxiv.org/abs/2204.12983}{{\tt 2204.12983}}].

\bibitem{Chestnov:2022okt}
V.~Chestnov, {\it {Recent progress in intersection theory for Feynman integrals decomposition.}},  {\em PoS} {\bf LL2022} (2022) 058, [\href{http://arxiv.org/abs/2209.01464}{{\tt 2209.01464}}].

\bibitem{Munch:2022ouq}
H.~J. Munch, {\it {Feynman Integral Relations from GKZ Hypergeometric Systems}},  {\em PoS} {\bf LL2022} (2022) 042, [\href{http://arxiv.org/abs/2207.09780}{{\tt 2207.09780}}].

\bibitem{Cho_Matsumoto_1995}
K.~Cho and K.~Matsumoto, {\it Intersection theory for twisted cohomologies and twisted riemann's period relations i},  {\em Nagoya Mathematical Journal} {\bf 139} (1995) 67--86.

\bibitem{Zagier2008}
D.~Zagier, {\em Elliptic Modular Forms and Their Applications}, pp.~1--103.
\newblock Springer Berlin Heidelberg, Berlin, Heidelberg, 2008.

\bibitem{Giroux:2024yxu}
M.~Giroux, A.~Pokraka, F.~Porkert, and Y.~Sohnle, {\it {The soaring kite: a tale of two punctured tori}},  {\em JHEP} {\bf 05} (2024) 239, [\href{http://arxiv.org/abs/2401.14307}{{\tt 2401.14307}}].

\bibitem{10.1093/imrn/rnaa060}
P.~Lochak, N.~Matthes, and L.~Schneps, {\it {Elliptic Multizetas and the Elliptic Double Shuffle Relations}},  {\em International Mathematics Research Notices} {\bf 2021} (04, 2020) 695--753.

\bibitem{10.2140/ant.2017.11.2113}
N.~Matthes, {\it {On the algebraic structure of iterated integrals of quasimodular forms}},  {\em Algebra \& Number Theory} {\bf 11} (2017), no.~9 2113 -- 2130.

\bibitem{Broedel:2018iwv}
J.~Broedel, C.~Duhr, F.~Dulat, B.~Penante, and L.~Tancredi, {\it {Elliptic symbol calculus: from elliptic polylogarithms to iterated integrals of Eisenstein series}},  {\em JHEP} {\bf 08} (2018) 014, [\href{http://arxiv.org/abs/1803.10256}{{\tt 1803.10256}}].

\bibitem{Broedel_2018}
J.~Broedel, C.~Duhr, F.~Dulat, and L.~Tancredi, {\it Elliptic polylogarithms and iterated integrals on elliptic curves. part i: general formalism},  {\em Journal of High Energy Physics} {\bf 2018} (May, 2018).

\bibitem{zerbini_enriquez}
B.~Enriquez and F.~Zerbini, {\it {Elliptic Hyperlogarithms}},  \href{http://arxiv.org/abs/2307.01833}{{\tt 2307.01833}}.

\bibitem{deneufchatel:hal-00558773}
M.~Deneufch{\^a}tel, G.~H. Duchamp, V.~Hoang Ngoc~Minh, and A.~I. Solomon, {\it {Independence of hyperlogarithms over function fields via algebraic combinatorics.}},  in {\em {Lecture Notes in Computer Science, Volume 6742 (2011),}}.
\newblock May, 2011.

\bibitem{Remiddi:1999ew}
E.~Remiddi and J.~A.~M. Vermaseren, {\it {Harmonic polylogarithms}},  {\em Int. J. Mod. Phys. A} {\bf 15} (2000) 725--754, [\href{http://arxiv.org/abs/hep-ph/9905237}{{\tt hep-ph/9905237}}].

\bibitem{AIF_1987__37_4_191_0}
Y.~I. Manin, {\it Some remarks on {Koszul} algebras and quantum groups},  {\em Annales de l'Institut Fourier} {\bf 37} (1987), no.~4 191--205.

\bibitem{shelper}
A.~Shepler and S.~Witherspoon, {\it Poincare-birkhoff-witt theorems},  \href{http://arxiv.org/abs/1404.6497}{{\tt 1404.6497}}.

\end{thebibliography}\endgroup

\end{document}